\newtheorem{theorem}{Theorem}
\newtheorem{prop}{Proposition}
\title{\boldmath Flat Connections from Irregular Conformal Blocks}
\author[\ast,\dag]{Babak Haghighat}
\author[\ast]{Yihua Liu}
\author[\ast,\dag]{and Nicolai Reshetikhin}
\affiliation[\ast]{Yau Mathematical Sciences Center, Tsinghua University, Beijing, 100084, China}
\affiliation[\dag]{Yanqi Lake Beijing Institute of Mathematical Sciences and Applications (BIMSA), Huairou District, Beijing 101408, P. R. China}
\abstract{In this work we study Liouville conformal blocks with degenerate primaries and one operator in an irregular representation of the Virasoro algebra. Using an algebraic approach, we derive modified BPZ equations satisfied by such blocks and subsequently construct corresponding integral representations based on integration over non-compact Lefschetz cycles. The integral representations are then used to derive novel types of flat connections on the irregular conformal block bundle.}
\begin{document} 
\maketitle
\flushbottom

\section{Introduction and Summary}
Irregular conformal blocks in the two-dimensional Liouville CFT were first studied \cite{Gaiotto:2009ma,Gaiotto:2012sf,Bonelli:2011aa} in the context of the AGT correspondence \cite{Alday:2009aq}. Their importance arises from the fact that the quantum geometry of 4d $\mathcal{N}=2$ Argyres-Douglas theories \cite{Argyres:1995jj} admits higher order or \textit{irregular} singular points. To model such singularities from the viewpoint of the 2d CFT, one lets two conformal primaries collide in such a way that their OPE expansion with the energy momentum tensor gives rise to higher (than second) order poles. Using the operator state correspondence, such irregular operators can then be understood as eigenstates of the Virasoro generators $L_n$ for positive $n$. In other words they correspond to coherent states in the Hilbert sapce. 

There are two principal methods for the computation of conformal blocks with ordinary primary field insertions. One approach is to solve the so called \textit{BPZ equation} \cite{Belavin:1984vu} which is a higher order ordinary differential equation of Fuchsian type arising from the insertion of a degenerate state. Such degenerate states lead to null states upon action of a certain combination of Virasoro operators of a particular degree which upon using conformal Ward identities leads to the corresponding differential equation. Another method is the \textit{Coulomb gas} formalism of Dotsenko and Fateev \cite{Dotsenko:1984nm}. Here one views the individual primaries as charged particles which together with one or several screening charges are subject to a two-dimensional logarithmic Coulomb potential. The conformal block then arises from the computation of the corresponding partition function. The result can be expressed as an integral over certain loci in configuration space which mathematically can be viewed as period integrals. This approach is particularly useful for the interpretation of conformal blocks as fractional quantum Hall (FQH) wavefunctions \cite{Moore:1991ks,Santachiara:2010bt,VafaFQHE,Bergamin:2019dhg}. This gives a natural handle to compute the quantum statistics of Anyons in such systems and in particular to identify non-Abelian topological phases. These are gapped phases of matter where adiabatically transporting one Anyon around another gives rise to a unitary transformation of the degenerate groundstate wavefunctions. Such systems can therefore store topologically protected qubits and are important for topological quantum computation \cite{DasSarma:2005zz}. From a more modern perspective, there will be one-form symmetries \cite{Gaiotto:2014kfa} in the bulk acting on anyonic Wilson lines giving rise to non-invertible symmetries \cite{Shao:2023gho}.

The presence of irregular operators, with which we are concerned in the current paper, changes the above structures as follows. The conformal Ward identities are modified leading to modified versions of the BPZ equation \cite{Bonelli:2022ten}. We re-derive these modified BPZ equations by first constructing explicit representations of irregular vertex operators and subsequently compute their OPE with the energy momentum tensor. We then use the explicit form of the operators to compute the integrand of the Coulomb gas representation of irregular conformal blocks. The corresponding potential gets modified by polynomial terms in addition to the previously present logarithmic terms. From a more geometric perspective, conformal blocks of ordinary primaries correspond to Hypergeometric functions which can be viewed as sections of flat bundles over the configuration space. These sections have monodromies around regular singular points giving rise to representations of the braid group \cite{cmp/1104201923,Varchenko,Witten:1988hf}. This structure results in a braided fusion category and ultimately in a modular tensor category \cite{Moore-Seiberg}. Irregular vertex operators can be viewed as new types of monodromy defects and give rise to irregular singularities in configuration space. From a symmetry category point of view, these monodromy defects correspond to non-simple objects and are expected to give rise to new types of fusion categories \cite{Gu:2023bjg}. The resulting conformal block bundles are more interesting from several points of view. First, there will be Stokes lines emanating from the irregular singularity which change the monodromy behavior and result in asymptotic expansions there \cite{Gaiotto:2011nm,Gu:2023bjg}. Second, the corresponding flat connections have thus far never been studied and their integrability conditions have not been determined explicitly, yet. From yet another perspective, one expects a relation to the tt* equations of certain Landau-Ginzburg models with mixed logarithmic and polynomial potentials \cite{Cecotti:2014wea}. We explicitly construct flat connections by taking derivatives of the previously derived period integral representations of irregular conformal blocks.

\subsection{Coulomb gas formalism and Yang-Yang functional}
Our method to compute the conformal block is the free field realization also known as the Coulomb gas formalism \cite{Dotsenko:1984nm}. Considering a two-dimensional Liouville type CFT \cite{Teschner_2001,https://doi.org/10.48550/arxiv.hep-th/9304011}, with classical action given by
\begin{equation}
    S=\int dtd\sigma \left(\frac{1}{16\pi}((\partial_t\varphi)^2-(\partial_\sigma\varphi)^2)-\mu e^\varphi\right),
\end{equation}
the central charge and conformal dimensions of primary operators of the corresponding quantum theory are parameterized in terms of (a complex) coupling constant parameter $b$ as follows,
\begin{equation}
    Q = b + \frac{1}{b}, \quad c = 1 + 6 Q^2, \quad \Delta_{\alpha} = \alpha (Q - \alpha).
\end{equation}
Here, $\alpha$ is known as the momentum of the primary operator which is a vertex operator and can be expressed as $V_{\alpha} = e^{2\alpha \phi}$, with  $\phi\underset{b\rightarrow 0}{\sim}\frac{1}{2b}\varphi$ and $\alpha \in \mathbb{C}$. More useful for our purposes will be the following set of degenerate fields labeled by $r,s \in \mathbb{N}^+$, and constructed using the scalar field $\phi$ (see for example \cite{VafaFQHE}):
\begin{equation}
    \Psi_{r,s} \equiv \exp\left([(1-r)b + \frac{1-s}{b}]\phi\right).
\end{equation}
The corresponding conformal dimensions are given by
\begin{equation}
    h_{r,s} = h_0 - \frac{1}{4} \left(r b + s \frac{1}{b}\right)^2, \quad h_0 = \frac{1}{24}(c-1).
\end{equation}
For $k_a \in \mathbb{N}^+$ the relation $V_{-k_a/2b} = \Psi_{1,k_a+1}$.
These operators satisfy the fusion rules
\begin{equation}
    V_{-\frac{k_1}{2b}} \times V_{-\frac{k_2}{2b}} = \sum_{l = |k_1-k_2|}^{k_1 + k_2} V_{-\frac{l}{2b}}, \quad \Delta l = 2.
\end{equation}
Our goal is to compute conformal blocks $\mathcal{F}_{\Gamma_{\{\alpha\}}}$ arising in the computation of the correlation function of the following product of degenerate fields,
\begin{equation} \label{eq:pathint}
    \langle \prod_a V_{-k_a/2b}(z_a,\bar{z}_a)\rangle \equiv \int \left[\mathcal{D}\phi\right] e^{-S[\phi]} \prod_a e^{-\frac{k_a}{b} \phi}= \sum_{\{\alpha\}} C_{\{k\},\{\alpha\}} \overline{C}_{\{k\},\{\alpha\}}\mathcal{F}_{\Gamma_{\{\alpha\}}}\overline{\mathcal{F}}_{\Gamma_{\{\alpha\}}},
\end{equation}
where $\{\alpha\}$ denotes the collection of all fields arising as intermediate fusion states in the conformal block tree and  $C_{\{k\},\{\alpha\}}$ are suitable fusion coefficients. The conformal blocks $\mathcal{F}$ can be computed using extra insertions of $V_{1/b}(w_i)$ operators by integrating the following free field correlator over suitable integration cycles $\Gamma$ \cite{Dotsenko:1984nm,Teschner_2001,Gaiotto:2011nm},
\begin{equation} \label{eq:freefield}
    \begin{aligned} 
    \mathcal{V}\left(\prod_iV_{1/b}(w_i)\prod_aV_{-k_a/2b}(z_a)\right)&=\prod_{i<j}(w_i-w_j)^{\frac{-2}{b^2}}\prod_{i,a}(w_i-z_a)^{\frac{k_a}{b^2}}\prod_{a<b}(z_a-z_b)^{\frac{-k_ak_b}{2b^2}}\\
    &=\exp\left(\frac{1}{b^2}\mathcal{W}(w,z)\right), 
    \end{aligned} 
\end{equation}
where $\mathcal{W}$ is given by
\begin{equation} \label{eq:Wreg}
    \mathcal{W} = \sum_{i<j} -2 \log(w_i-w_j) + \sum_{i,a} k_a \log(w_i-z_a) - \sum_{a<b} \frac{1}{2} k_a k_b \log(z_a - z_b),
\end{equation}
and is known as the Yang-Yang superpotential due to its relation to 2d $\mathcal{N}=(2,2)$ Landau-Ginzburg models and integrable systems \cite{Jeong:2018qpc,Yang:1968rm}. The $V_{1/b}(w_i)$ are known as screening charges and their number is restricted by the following charge conservation condition 
\begin{equation}\label{eq:charge}
    \sum_i \alpha_i + \sum_a \alpha_a = Q, \quad \alpha_i = \frac{1}{b},
\end{equation}
which corresponds to poles in the evaluation of the path integral \eqref{eq:pathint} with residues given by the free field correlator \eqref{eq:freefield} in the vicinity of a background charge $-Q$. One may employ the duality $\alpha_{i,a} \mapsto Q - \alpha_{i,a}$, in order to ensure that the constraint \eqref{eq:charge} holds. One then integrates over $w_i$ variables along suitable cycles $\Gamma$ to arrive at an expression for the conformal blocks of the interacting Liouville CFT,
\begin{equation}
\label{block}
    \mathcal{F}_{\Gamma}  \equiv \int_\Gamma \mathcal{V}\left(\prod_iV_{1/b}(w_i)\prod_aV_{-k_a/2b}(z_a)\right) \prod_idw_i,
\end{equation}
where we may have more than one integration variable, denoted by $w_i$, and the integration cycle $\Gamma$ can be rather freely chosen as long as the integral converges. Let us give an example to clarify the procedure. To this end consider the 4-point function
\begin{eqnarray}
    \langle \Psi_{1,2}(\infty) \Psi_{1,2}(1) \Psi_{1,2}(z) \Psi_{1,2}(0)\rangle_{\Gamma} & = & \int_\Gamma \mathcal{V}\left(V_{\frac{1}{b}}(w) V_{-\frac{1}{2b}}(0)V_{-\frac{1}{2b}}(z)V_{-\frac{1}{2b}}(1)V_{-\frac{1}{2b}}(\infty)\right) dw \nonumber \\
    ~ & \sim & \left[z(1-z)\right]^{-\frac{1}{2b^2}} \int_{\Gamma} dw \left[w (w-1)(w-z)\right]^{1/b^2}, 
\end{eqnarray}
where $\Gamma$ can be one of two cycles $\Gamma_{\alpha}$, $\alpha=0,\frac{1}{b}$, defined as
\begin{equation}
    \Gamma_0 = \left[1,\infty\right], \quad \Gamma_{\frac{1}{b}} = \left[0,z\right].
\end{equation}
Thus the two integration cycles correspond to two conformal blocks which in turn correspond to the two different fusion channels. More generally, the integration cycles are given by \textit{Lefschetz thimbles} \cite{Gaiotto:2011nm,Gu:2023bjg}. Such integration thimbles are composed of steepest descent paths flowing out from critical points of $\mathcal{W}(w,z)$. So for a critical point $\sigma$ of $\mathcal{W}(w,z)$, the associated Lefschetz thimble is obtained by joining all paths $\mathcal{W}(t)$, $t \in (-\infty,0]$ which solve the first order equation
\begin{equation}
    \frac{d\bar{w}}{dt} = - \frac{\partial \mathcal{W}}{\partial w}, \quad \frac{dw}{dt} = - \frac{\partial \bar{\mathcal{W}}}{d \bar{w}}, \quad \mathcal{W}(-\infty) = \sigma. 
\end{equation}
Such thimbles can either arise from flows between two critical points but in more general situations are non-compact and flow to infinity along both ends. These non-compact cycles become important once we compute conformal blocks with irregular operators which is the main subject of the present paper.

\subsection{Summary of results}

The main results are as follows. In Section \ref{sec:irrvirasoro}, we revisit irregular representations of the Virasoro algebra and derive a modified BPZ equation for conformal blocks with three degenerate primaries $\Psi_{1,2}$ and one rank one irregular operator at infinity using an algebraic approach, see equation \eqref{eq:irrBPZ}. We then proceed in Section \ref{sec:intrep} to derive integral representations for conformal blocks with an arbitrary number of degenerate fields and screening charges and one irregular operator at infinity. The result is found to be given by
\begin{equation} \label{eq:Firr}
    \mathcal{F}_{\Gamma} = \int_{\Gamma} \exp\left(\frac{1}{b^2} \mathcal{W}_{\mathrm{irr}}\right)\prod_i dw_i,
\end{equation}
where $\mathcal{W}_{\mathrm{irr}}$ is given by
\begin{eqnarray}
    \mathcal{W}_{\mathrm{irr}} & \equiv & \sum_{i<j} -2 \log(w_i-w_j) + \sum_{i,a} k_a \log(w_i-z_a) - \sum_{a<b} \frac{1}{2} k_a k_b \log(z_a - z_b) \nonumber \\
    ~ & ~ & + \Lambda\left(\sum_i w_i - \frac{1}{2} \sum_a k_a z_a \right). \label{eq:Wirr}
\end{eqnarray}
In the above, $\Lambda$ is a constant proportional to the eigenvalue of the irregular state under the action of the Virasoro generator $L_1$. Furthermore, the integration cycle $\Gamma$ is a non-compact Lefschetz thimble corresponding to a critical point of $\mathcal{W}_{\mathrm{irr}}$. Restricting to three degenerate primaries of types $\Psi_{1,2}$, $\Psi_{1,k_0}$, $\Psi_{1,k_1}$ and one screening charge, we show that the above integral representation satisfies the modified BPZ equation 
\begin{align}
    &\left(b^2\frac{\partial^2}{\partial z^2}-(\frac{1}{z}+\frac{1}{z-1})\frac{\partial}{\partial z}+\frac{\Delta(k_1)}{(z-1)^2}+\frac{\Delta(k_0)}{z^2} \right. \nonumber\\
    &\left.+\frac{1}{z(z-1)}\left(\Lambda\frac{\partial}{\partial \Lambda}+\Delta_\alpha-\Delta(k_1)-\Delta(k_0)-\Delta(1)\right)-\frac{\Lambda(\alpha-Q)}{b z}-\frac{\Lambda^2}{4 b^2}\right)\mathcal{F}_\Gamma=0.
\end{align}
In Sections \ref{sec:diffeq} and \ref{sec:generaldiff} we then show that the integrals \eqref{eq:Firr} satisfy first order vector valued differential equations defining flat connections on the conformal block bundle. To state the result, we first rewrite $\mathcal{F}$ as 
\begin{equation}
    \mathcal{F}_{\Gamma} = \prod_{i}\exp(-\frac{\Lambda}{2 b^2}z_i)\prod_{i\neq j}(z_i-z_j)^{\frac{-k_i k_j}{2b^2}} \phi_{\Gamma}(z), 
\end{equation}
with
\begin{equation} \label{eq:phin-point}
    \phi_{\Gamma}(z) \equiv \int_{\Gamma} dw \prod_{i}((w-z_i)^{\frac{k_i}{b^2}})\exp\left(\frac{\Lambda}{ b^2}w\right). 
\end{equation}
Finally, defining $\Psi=\begin{pmatrix}
\frac{\partial \phi}{\partial z_1}\\ ...
\\\frac{\partial \phi }{\partial z_n}\\
\end{pmatrix}$, one finds that $\Psi$ satisfies the equation
\begin{align} 
\frac{\partial }{\partial z_i}\Psi=A_i \Psi +\sum_{j\neq i}\frac{\Omega_{ij}}{z_i-z_j}\Psi,
\end{align}
where $A_i$ and $\Omega_{ij}$ are given by equations \eqref{eq:A} and \eqref{eq:Omega}. It is shown in Section \ref{sec:generaldiff} that these matrices indeed satisfy the required flatness conditions.

\section{Irregular representations of the Virasoro algebra}
\label{sec:irrvirasoro}

Irregular vectors were initially introduced by Gaiotto \cite{Gaiotto:2009ma}, and are characterized as special states in the CFT Hilbert space that are eigenstates of Virasoro generators $L_n$ with positive $n$ with all but a finite set of eigenvalues vanishing. In terms of the Liouville CFT, the corresponding operators can be modeled as collision limits of two regular vertex operators as derived for example in \cite{Gaiotto:2012sf}. In the first part of this section we will be reviewing the definition of irregular states and, by using the operator state correspondence, provide corresponding vertex operator representations. In the second part, we derive, using an algebraic approach, modified BPZ equations satisfied by conformal blocks with 3 regular primaries and one irregular operator at infinity. 

\subsection{Definition}
Irregular states are classified by their rank which is the largest integer $r$ for which the eigenvalue of $L_{2r}$ is non-vanishing. Taking an algebraic point of view, in the simplest of rank $1$, the irregular state $\ket{I_{\alpha,c}}$ satisfies the following properties \cite{Gaiotto:2012sf}:
\begin{align} 
    L_0 \ket{I_{\alpha,c}}&=(\Delta_{\alpha^{}}+c\partial_c)\ket{I_{\alpha,c}}, \nonumber\\
   L_1\ket{I_{\alpha,c}}&=-c(\alpha^{}-Q)\ket{I_{\alpha,c}}, \nonumber \\
    L_2\ket{I_{\alpha,c}}&=-\frac{c^2}{4}\ket{I_{\alpha,c}},\label{eq:LI}
\end{align}
where $\alpha$ denotes the Liouville momentum and $c$ is an arbitrary complex parameter not to be confused with the central charge. In Liouville theory, as reviewed before, we can use the Coulomb gas formalism to compute the correlation functions. Thus the $n$-point function introduced before can be obtained through the OPE of the vertex operators alone. In the following, we make a proposal (based on earlier work in \cite{Gaiotto:2011nm}) for rank $1$ irregular vertex operators. The corresponding expression, using the free field $\chi$, is given by
\begin{align} \label{eq:irrop1}
    \exp(c \partial \chi(L)),\quad \text{with limit $L\rightarrow\infty$.}
\end{align}
To verify that the above vertex operator indeed gives rise to an irregular state, we compute its OPE with the energy momentum tensor which for Liouville theory is defined as follows,
\begin{align}
    T(z)=-:\partial\chi(z) \partial \chi(z):+  Q \partial^2\chi(z),\text{with \space} Q=b+\frac{1}{b}.
\end{align}
We can hence deduce the action of the Virasoro algebra on the irregular state by employing the state-operator correspondence and the OPE of the stress-energy tensor with the irregular operator \eqref{eq:irrop1}. Using 
 Wick's theorem and the correlation function of the free field:
 \begin{align}
 \braket{\chi(z)\chi(L)}=-\frac{1}{2}\log(z-L),
 \end{align}
 we thus obtain
\begin{align}
     T(z)\exp(c \partial \chi(L))&=(-:\partial\chi(z) \partial \chi(z):+Q \partial^2\chi(z)) \exp(c \partial \chi(L))\nonumber\\
    &=-(\langle\partial\chi(z)c\partial\chi(L)\rangle^2 \exp(c \partial \chi(L))\nonumber\\
    &-2\langle\partial\chi(z)c\partial\chi(L)\rangle\partial\chi(z)\exp(c \partial \chi(L)))+Q\partial^2\chi(z)\exp(c \partial \chi(L))\nonumber\\
    &=(-\frac{c^2}{4}\frac{1}{(z-L)^4}+c\frac{1}{(z-L)^2}\partial\chi(z))\exp(c \partial \chi(L))\nonumber\\
    &+Q\partial^2\chi(z)\exp(c \partial \chi(L)) \nonumber\\
    &=(-\frac{c^2}{4}\frac{1}{(z-L)^4}+c\frac{1}{(z-L)^2}\partial\chi(z))\exp(c \partial \chi(L))\nonumber\\
    &+Q c\frac{1}{(z-L)^3}\exp(c \partial \chi(L)) \nonumber\\
    &=\left(-\frac{c^2}{4}\frac{1}{(z-L)^4}+Q c\frac{1}{(z-L)^3}\right)\exp(c \partial \chi(L))\nonumber\\
    &+c\frac{1}{(z-L)^2}(\partial\chi(L)+(z-L)\partial^2\chi(L))\exp(c \partial \chi(L)).
\end{align}
% \begin{align}
%      T(z)\exp(\Lambda \partial \chi(L))&=(-:\partial\chi(z) \partial \chi(z):+Q \partial^2\chi(z)) \exp(\Lambda \partial \chi(L))\nonumber\\
%     &=-(<\partial\chi(z)\Lambda\partial\chi(L)>^2 \exp(\Lambda \partial \chi(L))\nonumber\\
%     &+2<\partial\chi(z)\Lambda\partial\chi(L)>\partial\chi(z)\exp(\Lambda \partial \chi(L)))+Q\partial^2\chi(z)\exp(\Lambda \partial \chi(L))\nonumber\\
%     &=(-\Lambda^2\frac{1}{(z-L)^4}+2\Lambda\frac{1}{(z-L)^2}\partial\chi(z))\exp(\Lambda \partial \chi(L))\nonumber\\
%     &+Q\partial^2\chi(z)\exp(\Lambda \partial \chi(L)) \nonumber\\
%     &=(-\Lambda^2\frac{1}{(z-L)^4}+2\Lambda\frac{1}{(z-L)^2}\partial\chi(z))\exp(\Lambda \partial \chi(L))\nonumber\\
%     &+Q \Lambda\frac{2}{(z-L)^3}\exp(\Lambda \partial \chi(L)) \nonumber\\
%     &=(-\Lambda^2\frac{1}{(z-L)^4}+Q \Lambda\frac{2}{(z-L)^3})\exp(\Lambda \partial \chi(L))\nonumber\\
%     &+2\Lambda\frac{1}{(z-L)^2}(\partial\chi(L)+(z-L)\partial^2\chi(L))\exp(\Lambda \partial \chi(L))
% \end{align}
To obtain the action of $L_0$ on the corresponding irregular vector, we just use the integral expression
\begin{align}
    L_n=\oint \frac{dz}{2\pi i} z^{n+1} T(z),
\end{align}
giving
\begin{align}
    [L_n,\exp(c \partial \chi(L))]=\oint_{\text{around\space} L} \frac{dz}{2\pi i} z^{n+1} T(z) \exp(c \partial \chi(L)).
\end{align}
For $n=0$, we have
\begin{align}
    [L_0,\exp(c \partial \chi(L))]&=\oint_{\text{around\space} L} \frac{dz}{2\pi i} z T(z) \exp(c \partial \chi(L))\nonumber\\
    &=\oint_{\text{around\space} 0} \frac{dz}{2\pi i} (z+L) (-\frac{c^2}{4}\frac{1}{z^4}+Q c\frac{1}{z^3}\nonumber\\
    &+c\frac{1}{z^2}(\partial\chi(L)+z\partial\chi(L)))\exp(c \partial \chi(L))\nonumber\\
    &=c\partial\chi(L)\exp(c \partial \chi(L))+c L\partial^2\chi(L)\exp(c \partial \chi(L))\nonumber\\
    &=c \partial_c \exp(c \partial \chi(L))+c L \partial \exp(c \partial \chi(L)).
\end{align}
Similarly, for $ n=1$, we obtain
\begin{align}
    [L_1,\exp(c \partial \chi(L))]&=(Qc\exp(c \partial \chi(L)))+2c L \partial_c \exp(c \partial \chi(L))+c L^2 \partial \exp(c \partial \chi(L)).
\end{align}
From the above equations, we can get the action of the conformal operators via the state-operator correspondence by letting $L\rightarrow 0$, that is
\begin{align}
   [L_0,\exp(c \partial \chi(0))]\ket{0}&=c \partial_c \ket{I_{c}},\\
    [L_1,\exp(c \partial \chi(0))]\ket{0}&=Qc\ket{I_{c}},\\
    [L_2,\exp(c \partial \chi(0))]\ket{0}&=-\frac{c^2}{4}\ket{I_{c}},
\end{align}
where
\begin{equation}
    \ket{I_{c}} \equiv \exp(c \partial \chi(0)) \ket{0}.
\end{equation}
We can then see that this irregular state corresponds to the one from the definition given in \eqref{eq:LI} for $\alpha=0$. On the other hand, we can also construct vertex operators that satisfy the definition with nonzero $\alpha$, by setting
\begin{align}
    V(L) \equiv ~:\exp(2\alpha \chi(L))\exp(c \partial \chi(L)): .
\end{align}
We can check that the OPE with the stress-energy tensor has the following form
\begin{align}
    &T(z) :\exp(2\alpha \chi(L))\exp(c \partial \chi(L)): \nonumber \\
    &=(-:\partial\chi(z) \partial \chi(z):+Q \partial^2\chi(z)):\exp(2\alpha \chi(L))\exp(c \partial \chi(L)):\nonumber\\
    &=\left(\frac{Q\alpha-\alpha^2}{(z-L)^2}+\frac{\alpha}{z-L}\partial\chi(z)\right)V(L)-\frac{c\alpha}{(z-L)^3}V(L) \nonumber\\  
&+\left(-\frac{c^2}{4}\frac{1}{(z-L)^4}+c\frac{1}{(z-L)^2}\partial\chi(z)\right)V(L)+Q c\frac{1}{(z-L)^3}V(L)  .  
\end{align}
Similar to before, we can then read off the action of $L_i$ on the irregular state using the state-operator correspondence,
\begin{align}
    L_0\ket{I_{\alpha,c}}&=\Delta_\alpha+c\partial_c\ket{I_{\alpha,c}}, \\
    L_1\ket{I_{\alpha,c}}&=-c(\alpha-Q)\ket{I_{\alpha,c}}, \\
    L_2\ket{I_{\alpha,c}}&=-\frac{c^2}{4}\ket{I_{\alpha,c}},
\end{align}
where $\ket{I_{\alpha,c}}=V(0)\ket{0}$, which is exactly the irregular vector of rank $1$ introduced in \cite{Gaiotto:2012sf}.

\subsection{Derivation of modified BPZ equation}
We can derive BPZ like equations utilizing the above OPEs. When considering the descendant states of the primary state corresponding to  $V_{\frac{-1}{2b}}(0)\ket{0}:=  \ket{V_{\frac{-1}{2b}}}$ with conformal weight $\Delta=-\frac{1}{2b}(Q+\frac{1}{2b})$, one finds that there is a null state, which means a state of zero norm,  of the following form
\begin{align}
    (L_{-2}+b^2L_{-1}^2)\ket{V_{\frac{-1}{2b}}}.
\end{align}
Inserting the above null state into the correlation function should thus lead to vanishing of the correlation function,
\begin{align}
\mathcal{F}=\bra{0}V(\infty)(L_{-2}+b^2L_{-1})V_{\frac{-1}{2b}}(z)V_{\frac{-1}{2b}}(z_0)V_{\frac{-1}{2b}}(z_1)\ket{0}=0,
\end{align}
which corresponds to the following equation
\begin{align}
\mathcal{L}_{-2}\mathcal{F}+b^2\frac{\partial^2}{\partial z^2}\mathcal{F}=0.
\end{align}
The operator $\mathcal{L}_{-2}$ is called the descent operator that directly creates the descendant state by employing the state-operator correspondence, and whose action on a primary operator $\phi$ is defined as follows
\begin{align}
    \mathcal{L}_{-2} \phi(z)=\oint \frac{dw}{2\pi i}\frac{1}{(w-z)}T(w)\phi(z).
\end{align}
In the case when all fields in the bracket are primary fields, the operator $\mathcal{L}_{-2}$ has the representation
\begin{align}
    \mathcal{L}_{-2}=\left ( \sum_i \frac{h_i}{(z_i-z)^2}-\frac{1}{z_i-z}\frac{\partial}{\partial z_i} \right ).
\end{align}
This can be derived by considering the definition of the descendant fields in terms of contour integrals of the primary field OPE with the stress energy tensor $T(w)$ where the contour in the $w$-plane goes around the singularities at $z_i$. Since in our case, there is an additional irregular vertex operator, the  representation of the action of $\mathcal{L}_{-2}$ should be modified. Computing directly, one finds the action on the irregular operator in the following way
\begin{align}
    &\mathcal{L}_{-2}V(\infty)=-\oint_{\text{around $\infty$}} \frac{dw}{2\pi i }\frac{1}{w-z} T(w)  V(\infty) \nonumber\\
    &=-\oint_{\text{around $\infty$}} \frac{dw}{2\pi i }\frac{1}{w} T(w)  V(\infty) \nonumber\\
    &\stackrel{v=\frac{1}{w}}{=}-\oint_{\text{around $0$}}\frac{dv}{2\pi i }\frac{dw}{dv}  v \left(\frac{dv}{dw}\right)^2 T(v)  V \nonumber\\
    &=\oint\frac{dv}{2\pi i }  v^3\left(\left(\frac{Q\alpha-\alpha^2}{v^2}+\frac{\alpha}{v}\partial\chi(0)\right)V-\frac{c\alpha}{v^3}V +(-\frac{c^2}{4}\frac{1}{v^4}+c\frac{1}{v^2}\partial\chi(0))V+Q c\frac{1}{v^3}V\right) \nonumber\\
    &=-\frac{c^2}{4}V.
\end{align}
% \begin{align}
%     &\mathcal{L}_{-2}\exp(c\partial\chi(L))=\nonumber\\
%     &\oint_{\text{around $L$}} \frac{dw}{2\pi i } (w-z)^3 T(w)  \exp(c\partial\chi(L)) \nonumber\\
%     &=\oint_{\text{around $L$}}\frac{dw}{2\pi i }  ((w-L)-(z-L))^3 T(w)  \exp(c\partial\chi(L)) \nonumber\\
%     &=\oint\frac{dw}{2\pi i }((w-L)-(z-L))^3 \left (-\frac{c^2}{4(w-L)^4}+\frac{Qc}{(w-L)^3}  \right. \nonumber\\
%     &\left.+\frac{c}{(w-L)^2}(\frac{\partial}{\partial c}+(w-L)\frac{\partial}{\partial L})\right )\exp(c\partial\chi(L))\nonumber\\
%     &=\left( -\frac{c^2}{4}-3Qc(z-L)+3c(z-L)^2\frac{\partial}{\partial c}-(z-L)^3\frac{\partial}{\partial L})\right)\exp(c\partial\chi(L)).
% \end{align}
Having obtained the action of the operator $\mathcal{L}_{-2}$ on the irregular state, we can proceed to deduce its action on the entire correlator and hence arrive at the second order differential equation that the conformal block $\mathcal{F}$ satisfies
\begin{align}
    \left((b^2\frac{\partial^2}{\partial z^2}+\frac{\Delta}{(z_1-z)^2}+\frac{\Delta}{(z_0-z)^2}-\frac{1}{z_1-z}\frac{\partial}{\partial z_1}-\frac{1}{z_0-z}\frac{\partial}{\partial z_0})-\frac{c^2}{4}\right)\mathcal{F}=0. \label{eq:irrBPZ1}
\end{align}
% \begin{align}
%     &\left(-b^2\frac{\partial^2}{\partial z^2}+\frac{\Delta}{(z_1-z)^2}+\frac{\Delta}{(z_0-z)^2}+\frac{\Delta}{(z_1-z)^2}+\frac{1}{z_1-z}\frac{\partial}{\partial z_1}+\frac{1}{z_0-z}\frac{\partial}{\partial z_0}\right)\nonumber\\
%     &+\left( -\frac{c^2}{4}-3Qc(z-L)+3c(z-L)^2\frac{\partial}{\partial c}-(z-L)^3\frac{\partial}{\partial L})\right)\mathcal{F}=0
% \end{align}
We can derive further equations satisfied by our conformal blocks corresponding to global conformal Ward identities. Notice that global conformal invariance with respect to $L_{-1}$ implies the identity
\begin{align}
    \mathcal{L}_{-1} \mathcal{F}(z,z_0,z_1,c)
    &=\bra{0}[L_1,I_{\alpha,c}]V_{-\frac{1}{2b}}(z)V_{-\frac{1}{2b}}(z_0)V_{-\frac{1}{2b}}(z_1)\ket{0}\nonumber\\&+\bra{0}I_{\alpha,c}[L_{-1},V_{-\frac{1}{2b}}(z)]V_{-\frac{1}{2b}}(z_0)V_{-\frac{1}{2b}}(z_1)\ket{0}
    \nonumber\\&+\bra{0}I_{\alpha,c} V_{-\frac{1}{2b}}(z)[L_{-1},V_{-\frac{1}{2b}}(z_0)]V_{-\frac{1}{2b}}(z_1)\ket{0}\nonumber\\&+\bra{0}I_{\alpha,c} V_{-\frac{1}{2b}}(z)V_{-\frac{1}{2b}}(z_0)[L_{-1},V_{-\frac{1}{2b}}(z_1)]\ket{0}=0.
\end{align}
Here the irregular state $\bra{I_{\alpha,c}}$ is inserted at infinity, so the action of $L_{-1}$ becomes its adjoint. From the action above, we have
\begin{align}
    L_1 \ket{I_{\alpha,c}}=-c(\alpha-Q) \ket{I_{\alpha,c}}=(b+\frac{1}{b}-\alpha)c\ket{I_{\alpha,c}}.
\end{align}
Another global conformal Ward identity is obtained from the action of $L_0$. With these two equations, we can express the partial derivatives $\frac{\partial}{\partial z_1}$ and $\frac{\partial}{\partial z_0}$ in terms of differential operators with respect to $z$ and $c$, i.e.
\begin{align}
    &\text{$L_{-1} $ invariance}\Rightarrow \left(\frac{\partial}{\partial z}+\frac{\partial}{\partial z_0}+\frac{\partial}{\partial z_1}+c(\alpha-Q) \right)\mathcal{F}=0. \label{eq:L1inv} \\
    & \text{$L_0 $ invariance}\Rightarrow \left(z\frac{\partial}{\partial z}+z_0\frac{\partial}{\partial z_0}+z_1\frac{\partial}{\partial z_1}+3\Delta -(c\frac{\partial}{\partial c}+\Delta_\alpha)\right)\mathcal{F}=0. \label{eq:L0inv}
\end{align}
Thus giving
\begin{align}
    \frac{\partial}{\partial z_1}\mathcal{F}&= \left(\frac{\Delta _{\alpha }-3 \Delta}{(z_1-z_0)}+z_0\frac{c(\alpha-Q)}{(z_1-z_0)}+\frac{c}{(z_1-z_0)}\frac{\partial}{\partial c}-\frac{z-z_0}{z_1-z_0}\frac{\partial}{\partial z}\right) \mathcal{F}, \\
     \frac{\partial}{\partial z_0}\mathcal{F}&=\left(\frac{\Delta _{\alpha }-3 \Delta}{(z_0-z_1)}+z_1\frac{c(\alpha-Q)}{(z_0-z_1)}+\frac{c}{(z_0-z_1)}\frac{\partial}{\partial c}-\frac{z-z_1}{z_0-z_1}\frac{\partial}{\partial z}\right)\mathcal{F}.
\end{align}
% \begin{align}
%     \frac{\partial}{\partial z_1}\mathcal{F}&= \left(\frac{-\Delta _{\alpha }-6 \Delta}{2(z_1-z_0)}-z_0\frac{c(\alpha-Q)}{2(z_1-z_0)}-\frac{c}{2(z_1-z_0)}\frac{\partial}{\partial c}-\frac{z-z_0}{z_1-z_0}\frac{\partial}{\partial z}\right) \mathcal{F} \\
%      \frac{\partial}{\partial z_0}\mathcal{F}&=\left(\frac{-\Delta _{\alpha }-6 \Delta}{2(z_0-z_1)}-z_1\frac{c(\alpha-Q)}{2(z_0-z_1)}-\frac{c}{2(z_0-z_1)}\frac{\partial}{\partial c}-\frac{z-z_1}{z_0-z_1}\frac{\partial}{\partial z}\right)\mathcal{F}
% \end{align}
Inserting back in \eqref{eq:irrBPZ1} finally gives the BPZ like differential equation satisfied by $\mathcal{F}$, namely
\begin{align}
    &\left(b^2\frac{\partial^2}{\partial z^2}+\frac{\Delta}{(z_1-z)^2}+\frac{\Delta}{(z_0-z)^2}-\frac{c^2}{4}\right)\mathcal{F}\nonumber\\
    &+\left(\frac{-2 z+z_0+z_1}{2 \left(z-z_0\right) \left(z-z_1\right)}\frac{\partial}{\partial z}+\frac{1}{ \left(z-z_0\right) \left(z-z_1\right)}c\frac{\partial}{\partial c} +\frac{-z+z_0+z_1}{ \left(z-z_0\right) \left(z-z_1\right)}c(\alpha-Q) \right.\nonumber\\
    &\left.+\frac{1}{ \left(z-z_0\right) \left(z-z_1\right)}(\Delta _{\alpha }-3 \Delta)\right)\mathcal{F}=0.
\end{align}
% \begin{align}
%     &\left((-b^2\frac{\partial^2}{\partial z^2}+\frac{\Delta}{(z_1-z)^2}+\frac{\Delta}{(z_0-z)^2}-\frac{c^2}{4}\right)\mathcal{F}\nonumber\\
%     &+\left(\frac{-2 z+z_0+z_1}{2 \left(z-z_0\right) \left(z-z_1\right)}\frac{\partial}{\partial z}-\frac{1}{2 \left(z-z_0\right) \left(z-z_1\right)}c\frac{\partial}{\partial c} -\frac{-z+z_0+z_1}{2 \left(z-z_0\right) \left(z-z_1\right)}c(\alpha-Q) \right.\nonumber\\
%     &\left.+\frac{1}{2 \left(z-z_0\right) \left(z-z_1\right)}(-\Delta _{\alpha }-6 \Delta)\right)\mathcal{F}=0
% \end{align}
Setting $z_1=1,z_0=0$, we arrive at the final form of the BPZ equation:
\begin{align} \label{eq:irrBPZ}
    &\left(b^2\frac{\partial^2}{\partial z^2}-(\frac{1}{z}+\frac{1}{z-1})\frac{\partial}{\partial z}+\frac{\Delta}{(z-1)^2}+\frac{\Delta}{z^2}+ \frac{1}{z(z-1)}(c\frac{\partial}{\partial c}+\Delta_\alpha-3\Delta)-\frac{c(\alpha-Q)}{z}-\frac{c^2}{4}\right)\mathcal{F}=0.
\end{align}
% \begin{align}
%     \left(-b^2\frac{\partial^2}{\partial z^2}-(\frac{1}{z}+\frac{1}{z-1})\frac{\partial}{\partial z}+\frac{\Delta}{(z-1)^2}+\frac{\Delta}{z^2}+\frac{1}{z(z-1)}(c\frac{\partial}{\partial c}+\Delta_\alpha+\Delta_\beta+\alpha\beta-3\Delta)-\frac{c(\alpha+\beta-Q)}{z}-\frac{c^2}{4}\right)\mathcal{F}=0,
% \end{align}

% And when we consider the limit of $L\rightarrow\infty$, the above equation reduces to our BPZ like equation:
% \begin{align}
%     &\left(b^2\frac{\partial^2}{\partial z^2}+\frac{\Delta}{(z_1-z)^2}+\frac{\Delta}{(z_0-z)^2}+\frac{\Delta}{(z_1-z)^2}\right)\nonumber\\
%     &+\left(-(\frac{1}{z_1-z}+\frac{1}{z_0-z})\frac{\partial}{\partial z}+\frac{1}{2(z_1-z)(z_0-z)}\frac{\partial}{\partial c} +\frac{3}{(z_1-z)(z_0-z)}\Delta -\frac{z_0+z_1-z}{2(z_1-z)(z_0-z)}\frac{Qc}{2}\right)\mathcal{V}=0
% \end{align}

\section{Integral representations}
\label{sec:intrep}

We next want to compute integral representations of our irregular conformal blocks. Our strategy will be to compute the free field correlator with one irregular operator insertion at infinity, 
\begin{equation}
    \mathcal{V}\left(\prod_i V_{1/b}(w_i) \prod_a V_{-k_a/2b}(z_a) I_{\alpha,c}(\infty) \right),
\end{equation}
and then subsequently write down a Coulomb gas representation using this correlator. To this end, we need to consider both the Wick contraction of an irregular operator at position $L$ with a regular vertex operator at position $z$ as well as the contraction of a regular vertex operator at $L$ with that at $z$. But the second contraction will only bring an overall factor depending on $L$ in front of the correlation function, when $L\rightarrow \infty$. So we only need to consider the first contraction, which is of the form
\begin{equation}
    \wick{\c1 \exp(2 a \chi(z)) \c1 \exp(c \partial \chi(L))} = \wick{\exp\left(2 a c \c1 \chi(z) \partial \c1 \chi(L)\right)} = \exp\left(-a c \frac{1}{z - L}\right)~.
\end{equation}
If we now make the following replacement,
\begin{equation}\label{coordtrans}
    c \mapsto \frac{L^2}{b} \Lambda,
\end{equation}
% \begin{equation}\label{coordtrans}
%     \Lambda \equiv \frac{b}{L^2}c, \quad z \equiv \lim_{L \to \infty} \frac{x}{ L}, \quad \textrm{with } z \ll 1,
% \end{equation}
then the above can be simplified as follows for large $L$,
\begin{equation}
    \wick{\c1 \exp(2a \chi(z)) \c1 \exp(c \partial \chi(L))}  = \exp\left(a \frac{\Lambda L}{b}(1+   \frac{z}{L} + \mathcal{O}(L^{-2}))\right)\nonumber\propto \exp\left (\frac{a \Lambda}{b}   z  \right ).
\end{equation}
With this result and the OPE between regular vertex operators, it is now clear that the free field correlation functions in the integrand take the form
\begin{align}\label{Correlationfunction}
  &\mathcal{V}\left(\prod_i V_{1/b}(w_i) \prod_a V_{-k_a/2b}(z_a) I_{\alpha,c}(\infty) \right)\nonumber\\
  &\propto \prod_{i\neq j} (w_i-w_j)^{-\frac{2}{b^2}}\prod_{i,a} (w_i-z_a)^{\frac{k_a}{b^2}}\prod_{a\neq b} (z_a-z_b)^{\frac{-k_a k_b}{2b^2}}\exp\left(\frac{\Lambda}{b^2}(\sum_i w_i -\frac{1}{2}(\sum_a k_a z_a))\right).
\end{align}
The full integral representation then becomes
\begin{eqnarray} 
    \mathcal{F}_{\Gamma} & = &\int_{\Gamma}\mathcal{V}\left(\prod_i V_{1/b}(w_i) \prod_a V_{-k_a/2b}(z_a) I_{\alpha,c}(\infty) \right) \prod_i dw_i \nonumber \\
    ~ & = & \int_{\Gamma} \exp\left(\frac{1}{b^2} \mathcal{W}_{\mathrm{irr}}\right) \prod_i dw_i,
\end{eqnarray}
where $\mathcal{W}_{\mathrm{irr}}$ is given by \eqref{eq:Wirr}.
In the following sections we will be focusing on cases where there is only one screening charge, i.e. the integration cycles $\Gamma$ is a real co-dimension one cycle in the complex $w$-plane and show that the corresponding expression satisfies the modified BPZ equation.

\subsection{3-point function with irregular operator at infinity}

Let us next turn to the case of a 3-point function with one irregular operator at infinity.  Setting the $k_i = 1$, this case has superpotential
\begin{align}
 \mathcal{W}(w,z_1,z_2,z_3)=&-\frac{1}{2}(\log(z_1-z_2)+\log(z_1-z_3)+\log(z_2-z_3))+\\ \nonumber &\log(w-z_1)+\log(w-z_2)+\log(w-z_3)- \Lambda \\ \nonumber
 &\left(\frac{z_1}{2}+\frac{z_2}{2}+\frac{z_3}{2}-w\right),
\end{align}
and the conformal blocks can be written as the integral over the corresponding Lefschetz thimbles. For example, in this case, one can use conformal transformations setting $z_1=z,z_2=1,z_3=0$. Then our superpotential becomes
$$\mathcal{W}(w,z)=\log (w-z)+\log (w-1)+\log (w)+\Lambda \left(w-\frac{z}{2}-\frac{1}{2}-\frac{0}{2}\right)-\frac{1}{2}\log (z (z-1)),$$
and the resulting conformal block has the following thimble representation,
\begin{align}
    \mathcal{F}(z)&=\int_{\Gamma} dw \exp(\frac{1}{b^2} \mathcal{W}(w,z))\nonumber\\
    &=[z (z-1)]^{-\frac{1}{2 b^2}} e^{-\frac{\Lambda (z+1)}{2 b^2}}\int_{\Gamma} dw \exp \left(\frac{\Lambda w+\log (w-z)+\log (w-1)+\log (w)}{b^2}\right).
\end{align} 

\subsection{BPZ equation}

Now let us show that the integral representation of our conformal blocks satisfies the BPZ equation \eqref{eq:irrBPZ}. To this end, we first work with the generalized expression where the primaries are located at points $z$, $z_0$ and $z_1$. Then the corresponding block is given by
\begin{align}
    \mathcal{F}(z)&=\Lambda^{\beta}[(z-z_0) (z-z_1)(z_1-z_0)]^{-\frac{1}{2 b^2}} e^{-\frac{\Lambda (z+z_1+z_0)}{2 b^2}}
    \nonumber\\
    &\times \int_{\Gamma} dw \exp \left(\frac{\Lambda w+\log (w-z)+\log (w-z_1)+\log (w-z_0)}{b^2}\right),\nonumber
\end{align}
where we have introduced an overall scaling factor $\Lambda^{\beta}$ whose significance will become clear later. We want to show that $\mathcal{F}$ satisfies \eqref{eq:irrBPZ1}. To this end, we introduce the notation 
\begin{equation} 
    \mathcal{F}(z) = f(z,z_0,z_1,\Lambda)\times \phi(z,z_0,z_1,\Lambda),
\end{equation}
where 
\begin{align}
    \phi(z,z_0,z_1,\Lambda)&=\int_{\Gamma} dw \exp \left(\frac{\Lambda w+\log (w-z)+\log (w-z_1)+\log (w-z_0)}{b^2}\right)\\& \equiv \int_{\Gamma}dw \exp(I), \label{eq:phi3point}
\end{align}
and
\begin{equation} \label{eq:f3point}
    f(z,z_0,z_1,\Lambda) \equiv \Lambda^{\beta}[(z-z_0) (z-z_1)(z_1-z_0)]^{-\frac{1}{2 b^2}} e^{-\frac{\Lambda (z+z_1+z_0)}{2 b^2}}.
\end{equation}
Henceforth we will abbreviate $\phi(z,1,0,\Lambda) = \phi(z)$. Then equation \eqref{eq:irrBPZ1} becomes
    \begin{align}
    \left((b^2\frac{\partial^2}{\partial z^2}+\frac{\Delta}{(z_1-z)^2}+\frac{\Delta}{(z_0-z)^2}-\frac{1}{z_1-z}\frac{\partial}{\partial z_1}-\frac{1}{z_0-z}\frac{\partial}{\partial z_0})-\frac{\Lambda^2}{4b^2}\right)(f\phi)=0, \label{eq:irrBPZ2}
\end{align}
since  we performed the rescaling \eqref{coordtrans}, 
so now the value of $c$ under the transformation should be  $\frac{\Lambda}{b}$. And we notice by direct substitution that
\begin{align}
     \left((b^2\frac{\partial^2}{\partial z^2}+\frac{\Delta}{(z_1-z)^2}+\frac{\Delta}{(z_0-z)^2}-\frac{1}{z_1-z}\frac{\partial}{\partial z_1}-\frac{1}{z_0-z}\frac{\partial}{\partial z_0})-\frac{\Lambda^2}{4b^2}\right)f=0.
\end{align}
Thus it remains to prove the following identity
\begin{align}
    f\left(b^2\frac{\partial^2}{\partial z^2}-\frac{1}{z_1-z}\frac{\partial}{\partial z_1}-\frac{1}{z_0-z}\frac{\partial}{\partial z_0}\right)\phi+ 2b^2 \frac{\partial f}{\partial z}\frac{\partial \phi}{\partial z}=0.
\end{align}
As we will see in the next subsection and in more generality in Section \ref{sec:generaldiff}, the functions $\phi$ are very special and satisfy certain identities which bear resemblance to KZ-type equations defining flat connections on the conformal block bundle. In particular, identity \eqref{eq:phidoubleprime} gives
\begin{align} \label{eq:phiddz}
   b^2\frac{\partial^2 \phi}{\partial z^2}=\Lambda \frac{\partial \phi}{\partial z}+\frac{1}{z-z_0}(\frac{\partial \phi}{\partial z}-\frac{\partial \phi}{\partial z_0})+\frac{1}{z-z_1}(\frac{\partial \phi}{\partial z}-\frac{\partial \phi}{\partial z_1}),
\end{align}
or after rearranging,
\begin{align}
    \left(b^2\frac{\partial^2}{\partial z^2}-\frac{1}{z_1-z}\frac{\partial}{\partial z_1}-\frac{1}{z_0-z}\frac{\partial}{\partial z_0}\right)\phi=\left(\Lambda+\frac{1}{z-z_0}+\frac{1}{z-z_1}\right)\frac{\partial \phi}{\partial z}.
\end{align}
We also notice that
\begin{align}
    \frac{\partial f}{\partial z}=-\frac{1}{2b^2}(\Lambda+\frac{1}{z-z_0}+\frac{1}{z-z_1})f,
\end{align}
and therefore
\begin{align}
    &f\left(b^2\frac{\partial^2}{\partial z^2}-\frac{1}{z_1-z}\frac{\partial}{\partial z_1}-\frac{1}{z_0-z}\frac{\partial}{\partial z_0}\right)\phi+ 2b^2 \frac{\partial f}{\partial z}\frac{\partial \phi}{\partial z}\nonumber\\
    &=f\left(\Lambda+\frac{1}{z-z_0}+\frac{1}{z-z_1}\right)\frac{\partial \phi}{\partial z}+2b^2 \frac{\partial f}{\partial z}\frac{\partial \phi}{\partial z}\nonumber\\
    &=f\left(\Lambda+\frac{1}{z-z_0}+\frac{1}{z-z_1}\right)\frac{\partial \phi}{\partial z}+2b^2 (-\frac{1}{2b^2})(\Lambda+\frac{1}{z-z_0}+\frac{1}{z-z_1})f \frac{\partial \phi}{\partial z}\nonumber\\
    &=0.
\end{align}
This completes the proof of \eqref{eq:irrBPZ2}. Furthermore, from the global conformal invariance equations \eqref{eq:genirrBPZ2} and \eqref{eq:genirrBPZ3} derived in Appendix \ref{sec:globalinv}, we have that 
\begin{align}
     \left(\frac{\partial}{\partial z}+\frac{\partial}{\partial z_0}+\frac{\partial}{\partial z_1}+\frac{\Lambda}{b}(\alpha-Q) \right)\mathcal{F}=0,
\end{align}
and
\begin{align}
    \left(z\frac{\partial}{\partial z}+z_0\frac{\partial}{\partial z_0}+z_1\frac{\partial}{\partial z_1}+3\Delta -(\Lambda\frac{\partial}{\partial \Lambda}+\Delta_\alpha)\right)\mathcal{F}=0,
\end{align}
and the previously introduced constant $\beta$ is fixed to be $\beta = -1$ and $\alpha=b+\frac{3}{2b}$ (see Appendix \ref{sec:globalinv}). The final form of the BPZ equation is now obtained by solving for $\partial_{z_i} \mathcal{F}$, $i=0,1$ using the equations above and inserting back into \eqref{eq:irrBPZ1}. The result is
\begin{align}
    &\left(b^2\frac{\partial^2}{\partial z^2}+\frac{\Delta}{(z_1-z)^2}+\frac{\Delta}{(z_0-z)^2}-\frac{\Lambda^2}{4b^2}\right)\mathcal{F}\nonumber\\
    &+\left(\frac{-2 z+z_0+z_1}{2 \left(z-z_0\right) \left(z-z_1\right)}\frac{\partial}{\partial z}+\frac{1}{ \left(z-z_0\right) \left(z-z_1\right)}\Lambda\frac{\partial}{\partial \Lambda} +\frac{-z+z_0+z_1}{ \left(z-z_0\right) \left(z-z_1\right)}(\alpha-Q)\frac{\Lambda}{b} \right.\nonumber\\
    &\left.+\frac{1}{ \left(z-z_0\right) \left(z-z_1\right)}(\Delta _{\alpha }-3 \Delta)\right)\mathcal{F}=0.
\end{align}
Finally, setting $z_1=1,z_0=0$, we have

\begin{align}
\left(b^2\frac{\partial^2}{\partial z^2}-(\frac{1}{z}+\frac{1}{z-1})\frac{\partial}{\partial z}+\frac{\Delta}{(z-1)^2}+\frac{\Delta}{z^2}+\frac{1}{z(z-1)}(\Lambda\frac{\partial}{\partial \Lambda}+\Delta_\alpha-3\Delta)-\frac{\Lambda(\alpha-Q)}{b z}-\frac{\Lambda^2}{4b^2}\right)\mathcal{F}=0
\end{align}

as a special case of \eqref{eq:genirrBPZ}.

\section{Derivation of KZ-type equation}
\label{sec:diffeq}

This section is concerned with the derivation of KZ-type differential equations satisfied by irregular conformal blocks. We derive vector valued first order equations which define flat connections on the irregular conformal block bundle. We first derive such equations for 2-point functions with one additional irregular operator, 3-point functions with one irregular operator, and then subsequently generalize to the n-point case. 

As we have seen, when including an irregular operator at infinity, the expression in \eqref{eq:Wreg} for $\mathcal{W}$ gets modified in the following way,
\begin{equation} \label{eq:irrpot}
    \mathcal{W}(w,z)\rightarrow \mathcal{W}(w,z)+\Lambda(\sum_i w_i-\frac12\sum_a z_a),
\end{equation}
where $\Lambda$ is a constant proportional to the eigenvalue of $L_1$ when acting on the irregular state. The added term is called the symmetry breaking term \cite{Gaiotto:2011nm}. In the following, we will see how it gives rise to an irregular singularity at infinity in the monodromy representation of conformal blocks.

\subsection{2-point function with one irregular operator}

As a warm up, let us first have a look at the example of 2-point functions with one additional irregular operator. The superpotential is given as follows, corresponding to the insertion of two regular operators at $z_1$ and $z_2$ and an irregular operator at infinity,
\begin{align}
    \mathcal{W}(w,z_1,z_2)=-\frac{1}{2}(\log(z_1-z_2))+\log(w-z_1)+\log(w-z_2)-\Lambda(\frac{z_1}{2}+\frac{z_2}{2}-w).
\end{align}
In this case, all the conformal blocks corresponding to such a superpotential are given by integration over Lefschetz Thimbles \cite{Gaiotto:2011nm}:
\begin{align}
\phi(z_1,z_2)=\int_{\Gamma_\alpha} \exp{\frac{1}{b^2}W} dw,
\end{align}
where the Lefschetz thimbles $\Gamma_{\alpha}$ are determined by solving the gradient descent equations corresponding to BPS solitons from the Landau-Ginzburg point of view. Moreover, by conformal symmetry,  we can take $z_1=z,z_2=0$ utilizing a conformal transformation. So the integration can be rewritten as follows,
\begin{align}
\phi(z)=\int_{\Gamma_\alpha} \exp{\frac{1}{b^2}W} dw=z^\frac{1}{2b^2}\int_{\Gamma_\alpha}(w-z)^\frac{1}{b^2}w^\frac{1}{b^2} \exp{\frac{\Lambda}{b^2}(-\frac{z}{2}+w)} dw.
\end{align}
Performing the suitable substitutions $w=\frac{z}{2}(t+1)\Leftrightarrow w-\frac{z}{2}=\frac{z}{2}t$, the integral becomes
 \begin{align}
    \phi(z)=z^{\frac{5}{2b^2}}\int_{\Gamma_\alpha}(t^2-1)^\frac{1}{b^2} \exp{\frac{\Lambda zt}{2b^2}} dt.
 \end{align}
We now have the following
\begin{prop}
Let $\phi(z)$ be defined as above. Furthermore, define 
\begin{align}
    \phi_1(z)\equiv \int_{\Gamma_\alpha}t(t^2-1)^\frac{1}{b^2} \exp{\frac{\Lambda zt}{2b^2}} dt.
\end{align}
Then the vector $\varphi\equiv \begin{pmatrix}
\phi(z)\\ \phi_1(z)
\end{pmatrix}$, satisfies the following first order differential equation,
$$\frac{d\varphi}{dz}=\left(
\begin{array}{cc}
 \frac{\frac{5}{2 b^2}}{z} & \frac{\Lambda }{2 b^2} \\
 \frac{\Lambda }{2 b^2} & \frac{\frac{5}{2 b^2}}{z}-\frac{2 \left(\frac{1}{b^2}+1\right)}{z} \\
\end{array}
\right)\varphi=\left(\left(
\begin{array}{cc}
 0 & \frac{\Lambda }{2 b^2} \\
 \frac{\Lambda }{2 b^2} & 0 \\
\end{array}
\right)+\frac{1}{z}\left(
\begin{array}{cc}
 \frac{5}{2 b^2} & 0 \\
 0 & \frac{1}{2 b^2}-2 \\
\end{array}
\right)\right)\varphi~.$$
\end{prop}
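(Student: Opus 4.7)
The strategy is to recognize both $\phi(z)$ and $\phi_1(z)$ as members of a single one-parameter family of master integrals
\[
  J_n(z) \;\equiv\; \int_{\Gamma_\alpha} t^n\,(t^2-1)^{1/b^2}\,\exp\!\left(\tfrac{\Lambda z t}{2b^2}\right) dt, \qquad n=0,1,2,\ldots,
\]
with $\phi(z) = z^{5/(2b^2)} J_0(z)$ and, reading $\phi_1$ consistently with the same power-of-$z$ prefactor, $\phi_1(z) = z^{5/(2b^2)} J_1(z)$. Differentiation under the integral sign gives the simple raising rule $J_n'(z) = \tfrac{\Lambda}{2b^2}\, J_{n+1}(z)$. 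Applied to $\phi$ this immediately produces the first row of the desired matrix: the diagonal term $\tfrac{5}{2b^2 z}\,\phi$ comes from differentiating the prefactor $z^{5/(2b^2)}$, while the off-diagonal term $\tfrac{\Lambda}{2b^2}\,\phi_1$ comes from $J_0'=\tfrac{\Lambda}{2b^2} J_1$.

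\textbf{The closing identity.} The second row requires expressing $J_2$ inside the two-dimensional span of $\{J_0, J_1\}$. I will obtain this by an integration-by-parts on the Lefschetz thimble. Because $\Gamma_\alpha$ is defined by steepest-descent flow out of a critical point of $\mathcal{W}/b^2$, the integrand decays exponentially at the two non-compact endpoints, so
\[
  0 \;=\; \int_{\Gamma_\alpha} \frac{d}{dt}\!\left[(t^2-1)^{1+1/b^2}\exp\!\left(\tfrac{\Lambda z t}{2b^2}\right)\right] dt.
\]
Expanding the $t$-derivative via the product rule and using $(t^2-1)^{1+1/b^2} = (t^2-1)(t^2-1)^{1/b^2}$, this becomes an inhomogeneous linear relation among $J_0$, $J_1$, $J_2$, whose solution is $J_2 = J_0 - \tfrac{4(1+b^2)}{\Lambda z}\, J_1$.

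\textbf{Assembly.} Substituting this reduction into $\phi_1'(z) = \tfrac{5}{2b^2 z}\phi_1 + z^{5/(2b^2)}\tfrac{\Lambda}{2b^2} J_2$ then yields the second row: the $J_0$-piece becomes the off-diagonal entry $\tfrac{\Lambda}{2b^2}\phi$, while combining the $1/z$ contribution from $J_2$ with the prefactor derivative $\tfrac{5}{2b^2 z}$ collapses to the diagonal entry $\tfrac{1}{z}\bigl(\tfrac{1}{2b^2}-2\bigr)\phi_1$, exactly matching the stated matrix. The only subtle point in the entire argument, and therefore the main obstacle, is the justification that the boundary terms in the integration by parts actually vanish. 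This is handled directly by the steepest-descent characterization of $\Gamma_\alpha$, which forces the prefactor $(t^2-1)^{1+1/b^2}\exp(\Lambda z t/(2b^2))$ to decay at its non-compact ends; once this is noted the rest of the proof is a direct computation.
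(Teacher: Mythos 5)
Your proof is correct and follows essentially the same route as the paper: the first row comes from differentiating under the integral sign, and your vanishing-total-derivative identity for $(t^2-1)^{1+1/b^2}\exp\!\left(\tfrac{\Lambda z t}{2b^2}\right)$ is precisely the integration by parts the paper performs after splitting $t^2=(t^2-1)+1$, yielding the same reduction of $J_2$ to the span of $J_0,J_1$. You additionally (and correctly) read $\phi_1$ with the $z^{5/(2b^2)}$ prefactor --- which the displayed definition omits but the paper's own computation of $\phi_1'$ tacitly assumes --- and you make explicit the vanishing of boundary terms at the non-compact ends of $\Gamma_\alpha$, a point the paper leaves implicit.
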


\begin{proof}
We observe  that,
$$\phi^{'}(z)=\frac{\frac{5}{2b^2}}{z} \phi(z)+z^{\frac{5}{2b^2}}\int_{\Gamma_\alpha}(t^2-1)^\frac{1}{b^2} \exp{\frac{\Lambda zt}{2b^2}} \frac{\Lambda t}{2b^2} dt=\frac{\frac{5}{2b^2}}{z} \phi(z)+\frac{\Lambda}{2b^2}\phi_1(z),$$
Moreover, differentiating further $\phi_1$, one arrives at
\begin{align*}
    \phi_1^{'}(z)&=\frac{\frac{5}{2b^2}}{z} \phi_1(z)+z^{\frac{5}{2b^2}}\int_{\Gamma_\alpha}(t^2-1)^\frac{1}{b^2}\frac{\Lambda t^2}{2b^2} \exp{\frac{\Lambda zt}{2b^2}}  dt \\&=\frac{\frac{5}{2b^2}}{z} \phi_1(z)+\Lambda\frac{z^{\frac{5}{2b^2}}}{2b^2}\int_{\Gamma_\alpha}(t^2-1)^\frac{1}{b^2}(t^2-1)\exp{\frac{\Lambda zt}{2b^2}}dt+\Lambda\frac{z^{\frac{5}{2b^2}}}{2b^2}\int_{\Gamma_\alpha}(t^2-1)^\frac{1}{b^2}\exp{\frac{\Lambda zt}{2b^2}}dt\\&=\frac{\frac{5}{2b^2}}{z}\phi_1(z)+\Lambda\frac{z^{\frac{5}{2b^2}}}{2b^2}\int_{\Gamma_\alpha}(t^2-1)^{\frac{1}{b^2}+1}\frac{2b^2}{\Lambda z}d(\exp{\frac{\Lambda zt}{2b^2}})+\frac{\Lambda }{2b^2}\phi(z)\\&=\frac{\frac{5}{2b^2}}{z}\phi_1(z)+\frac{\Lambda }{2b^2}\phi(z)-\frac{z^{\frac{5}{2b^2}}}{z}\int_{\Gamma_\alpha}d[(t^2-1)^{\frac{1}{b^2}+1}]\exp{\frac{\Lambda zt}{2b^2}}\\
    &=\frac{\frac{5}{2b^2}}{z}\phi_1(z)+\frac{\Lambda }{2b^2}\phi(z)-\frac{z^{\frac{5}{2b^2}}}{z}(1+\frac{1}{b^2})\int_{\Gamma_\alpha}(t^2-1)^{\frac{1}{b^2}} 2t \exp{\frac{\Lambda zt}{2b^2}} dt\\
    &=\frac{\frac{5}{2b^2}}{z}\phi_1(z)+\frac{\Lambda }{2b^2}\phi(z)-\frac{2}{z}(1+\frac{1}{b^2})\phi_1(z).
\end{align*}
Combining the expressions for $\phi'(z)$ and $\phi'_1(z)$ then gives the desired result.
\end{proof}
From the form of the equation satisfied by $\varphi$ we immediately deduce the presence of a regular singularity at $z=0$ and, for non-zero $\Lambda$, an irregular singularity at infinity upon performing the change of variables $z \rightarrow \frac{1}{\tilde z}$.
\paragraph{Getting the Stokes matrix using properties of Bessel functions.}
Using the above linear equation, we can get a second order ODE that $\phi(z)$ satisfies,
\begin{align}
   4 b^4 z^2 \phi ''(z) + 4 b^2 z\left(2 b^2-3\right) \phi '(z)-\left(10 b^2+\Lambda^2 z^2-5\right)\phi (z) =0,
\end{align}
which can be solved in terms of Bessel functions (see Appendix \ref{sec:bessel}) as follows,
\begin{equation}
\varphi (z)= z^{\frac{-b^2+3}{2 b^2}}\begin{pmatrix}
 J_{\frac{b^2+2}{2 b^2}}\left(-\frac{i\Lambda  z}{2 b^2}\right) \\  J_{\frac{3b^2+2}{2b^2}}\left(-\frac{i\Lambda  z}{2 b^2}\right)
\end{pmatrix} \quad \textrm{and} \quad \varphi (z)= z^{\frac{-b^2+3}{2 b^2}}\begin{pmatrix}
 Y_{\frac{b^2+2}{2 b^2}}\left(-\frac{i \Lambda  z}{2 b^2}\right) \\  Y_{\frac{3b^2+2}{2b^2}}\left(-\frac{i \Lambda  z}{2 b^2}\right)
\end{pmatrix}, 
\end{equation}
or after a basis transformation,
\begin{align}\label{Besselsol}
    \varphi (z)= z^{\frac{-b^2+3}{2 b^2}}\begin{pmatrix}
 H_{\frac{3b^2+2}{2 b^2}}^{(1)}\left(-\frac{i \Lambda z}{2 b^2}\right) \\  \frac{d}{dz}H_{\frac{3b^2+2}{2b^2}}^{(1)}\left(-\frac{i \Lambda z}{2 b^2}\right) 
\end{pmatrix} \quad \text{and} \quad \varphi (z)= z^{\frac{-b^2+3}{2 b^2}}\begin{pmatrix}
 H_{\frac{3b^2+2}{2 b^2}}^{(2)}\left(-\frac{i\Lambda  z}{2 b^2}\right) \\  \frac{d}{dz}H_{\frac{3b^2+2}{2b^2}}^{(2)}\left(-\frac{i \Lambda z}{2 b^2}\right) 
\end{pmatrix}.
\end{align}
In the above we have used that $J_{\nu-1}(z) $ is a linear combination of $J_\nu(z)$ and $\frac{d}{dz}J_\nu (z)$ 
, and $H^{(1)}_\nu=J_\nu+i Y_\nu,H^{(2)}_\nu=J_\nu-i Y_\nu $ are Hankel functions. 

It is clear that there are two possible singularities of the functions, one is at zero, and the other at infinity. The first one being a regular singularity, and the latter being of irregular nature. It is possible to obtain the Stokes matrix at infinity using the relationship:
\begin{align}\label{Besselsol1}
   &H_\nu ^{(1)}(x e^{im\pi})=-\frac{\sin (\pi  (m-1) n)}{\sin (\pi  n)} H_\nu ^{(1)}(x)-\frac{e^{-i \pi  n} \sin (\pi  m n)}{\sin (\pi  n)} H_\nu ^{(2)}(x), \nonumber\\
   &H_\nu ^{(2)}(x e^{im\pi})=\frac{e^{i \pi  n} \sin (\pi  m n)}{\sin (\pi  n)} H_\nu ^{(1)}(x)+\frac{\sin (\pi  (m+1) n)}{\sin (\pi  n)} H_\nu ^{(2)}(x) .  
\end{align}
Together with the asymptotic behavior of the Hankel functions in some particular branches,
\begin{align}
    H_\nu ^{(1)}(x)=\hat{H}_\nu ^{(1)}(x)x^{-\frac{1}{2}} e^{ix}, H_\nu ^{(2)}(x)=\hat{H}_\nu ^{(1)}(x)x^{-\frac{1}{2}} e^{-ix},
\end{align}
where $\hat{H}_\nu ^{(1)}(x)$ has asymptotic series expansion in powers of $x^{-1}$, one deduces that there are two Stokes rays emanating from $x=0$ along the positive and negative real axis. That is, the asymptotic behavior changes when crossing these rays. From the expression of \eqref{Besselsol} we know that the Stokes matrices of these two solutions don't depend on the term $z^{\frac{-b^2+3}{2 b^2}}$. Thus the Stokes matrices are those from the Hankel functions alone.

From these expressions we can extract the Stokes matrix around $z=\infty$ when crossing the rays along the real axis \eqref{Besselsol1}. The result is the following expression encoding all Stokes matrices:
\begin{align}
S=  \begin{pmatrix}
1 & 2e^{i \pi  \frac{3b^2+2}{2 b^2}} \cos (\pi  \frac{3b^2+2}{2 b^2})\\ 
0 &1 
\end{pmatrix}.
\end{align}

\subsection{3-point function with one irregular operator}

When proving that our conformal block $\mathcal{F}$ satisfies the BPZ equation, we made crucial use of the identity \eqref{eq:phiddz}. We now will derive this identity (in a slightly modified form) and show that this equation is more naturally seen as part of KZ-type equations defining sections of irregular conformal block bundles.
In order to solve for the exact expression of $\mathcal{F}$, it suffices to obtain a differential equation for
\begin{align}
    \phi(z)&=\int_{\Gamma} dw \exp \left(\frac{\Lambda w+\log (w-z)+\log (w-1)+\log (w)}{b^2}\right)\nonumber\\&=\int_{\Gamma} dw e^{\frac{\Lambda w}{b^2}} ((w-1) w (w-z))^{\frac{1}{b^2}}\nonumber\\&=:\int_{\Gamma}dw \exp(I). 
\end{align}
\begin{prop} \label{thm:3pointKZ}
Let $\phi(z)$ be given as above and define
\begin{equation}
    \phi_1(z)\equiv \frac{1}{b^2}\int_{\Gamma}dw \exp(I) \frac{1}{w}, \quad \phi_2(z)\equiv\frac{1}{b^2}\int_{\Gamma}dw \exp(I) \frac{1}{w-1}.
\end{equation}
Then the vector $\varphi \equiv \begin{pmatrix}
\phi' \\ 
\phi_1\\
\phi_2  
\end{pmatrix}$ satisfies the following first order ODE,
\begin{equation}
    \frac{d \varphi}{dz} = A(z) \varphi,
\end{equation}
where the matrix $A(z)$ is given as follows,
\begin{align}
    A(z)=\left(
\begin{array}{ccc}
 \frac{ \Lambda}{b^2} & 0 & 0 \\
 0 & 0 & 0 \\
 0 & 0 & 0 \\
\end{array}
\right)+\frac{1}{z}\left(
\begin{array}{ccc}
 \frac{1}{b^2} & \frac{1}{b^2} & 0 \\
 \frac{1}{b^2} & \frac{1}{b^2} & 0 \\
 0 & 0 & 0 \\
\end{array}
\right)+\frac{1}{z-1}\left(
\begin{array}{ccc}
 \frac{1}{b^2} & 0 & \frac{1}{b^2} \\
 0 & 0 & 0 \\
 \frac{1}{b^2} & 0 & \frac{1}{b^2} \\
\end{array}
\right).
\end{align}
\end{prop}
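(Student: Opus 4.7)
The plan is to derive each component of the matrix equation by differentiating the integral representations of $\phi'$, $\phi_1$, $\phi_2$ under the integral sign and then using two algebraic tools to recognise the result as a combination of the three basis integrals: partial fraction decomposition to split products of simple poles at $w=0,1,z$, and the vanishing of total derivatives $\int_\Gamma d(F e^I) = 0$ (valid since $\mathrm{Re}(I) \to -\infty$ along the Lefschetz thimble $\Gamma$, so boundary contributions vanish) to reduce higher-order poles to simple ones. Throughout, I will use the elementary identities $\partial_z I = -\tfrac{1}{b^2(w-z)}$ and
\[
\partial_w I = \frac{\Lambda}{b^2} + \frac{1}{b^2}\left(\frac{1}{w} + \frac{1}{w-1} + \frac{1}{w-z}\right),
\]
together with the dictionary $\int_\Gamma e^I/(w-z)\, dw = -b^2 \phi'$, $\int_\Gamma e^I/w\, dw = b^2 \phi_1$, $\int_\Gamma e^I/(w-1)\, dw = b^2 \phi_2$.

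For the second and third rows of $A(z)$ the argument is direct. Differentiating under the integral sign gives
\[
\partial_z \phi_1 = -\frac{1}{b^4}\int_\Gamma \frac{e^I\, dw}{w(w-z)},
\]
and the partial fraction $\tfrac{1}{w(w-z)} = \tfrac{1}{z}\bigl(\tfrac{1}{w-z} - \tfrac{1}{w}\bigr)$, followed by the dictionary above, should yield $\partial_z \phi_1 = \tfrac{1}{z b^2}(\phi' + \phi_1)$, exactly the second row of $A(z)\varphi$. The computation of $\partial_z \phi_2$ is identical with $w \mapsto w-1$ and $z \mapsto z-1$, producing the third row.

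The top row requires the harder step of reducing a double pole. Differentiating $\phi'(z) = -\tfrac{1}{b^2}\int_\Gamma e^I/(w-z)\, dw$ once more gives
\[
\phi''(z) = \frac{1-b^2}{b^4}\int_\Gamma \frac{e^I\, dw}{(w-z)^2}.
\]
To handle the double pole I would apply integration by parts with $F = 1/(w-z)$: the identity $0 = \int_\Gamma d\!\bigl(e^I/(w-z)\bigr) = \int_\Gamma\bigl(-\tfrac{1}{(w-z)^2} + \tfrac{\partial_w I}{w-z}\bigr)e^I\, dw$ rearranges to
\[
\left(1 - \frac{1}{b^2}\right)\int_\Gamma \frac{e^I\, dw}{(w-z)^2} = \frac{\Lambda}{b^2}\!\int_\Gamma\!\frac{e^I\, dw}{w-z} + \frac{1}{b^2}\!\int_\Gamma\!\frac{e^I\, dw}{w(w-z)} + \frac{1}{b^2}\!\int_\Gamma\!\frac{e^I\, dw}{(w-1)(w-z)},
\]
and the same partial fractions as before reduce the right-hand side to $-\Lambda\phi' - \tfrac{\phi'+\phi_1}{z} - \tfrac{\phi'+\phi_2}{z-1}$. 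Substituting back and simplifying the prefactors should then give $\phi'' = \tfrac{\Lambda}{b^2}\phi' + \tfrac{\phi'+\phi_1}{b^2 z} + \tfrac{\phi'+\phi_2}{b^2(z-1)}$, which is the first row of $A(z)\varphi$.

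The main obstacle I anticipate is precisely the bookkeeping in the last step: the coefficient $(1-b^2)/b^4$ from double $z$-differentiation has to combine with the $(1-1/b^2) = (b^2-1)/b^2$ coming from the IBP identity so that all $b$-dependence collapses into the clean overall $1/b^2$ seen in the statement. A small secondary point worth spelling out is the convergence justification for the integration by parts: since $\Gamma$ is a Lefschetz thimble of $\mathcal{W}_{\mathrm{irr}}$, the integrand decays exponentially at the non-compact ends so that the surface terms indeed vanish, which legitimises both the differentiation under the integral sign and the exact-form identity.
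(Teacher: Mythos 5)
Your proposal is correct and follows essentially the same route as the paper: differentiation under the integral sign, partial fractions to handle the products of simple poles for the $\phi_1'$ and $\phi_2'$ rows, and an integration-by-parts/exact-form identity on the thimble (your $\int_\Gamma d\bigl(e^I/(w-z)\bigr)=0$ is literally the paper's $\int_\Gamma \exp(I')\,d\bigl((w-z)^{1/b^2-1}\bigr)=-\int_\Gamma (w-z)^{1/b^2-1}d(\exp(I'))$ in disguise) to reduce the double pole in $\phi''$; the prefactor bookkeeping you flag does collapse to the clean $1/b^2$ as claimed. No gaps.
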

\begin{proof}
We notice that,
\begin{align}
    \left\{\begin{matrix}
\phi '(z)=-\frac{1}{b^2}\int_{\Gamma}dw \exp(I) \frac{1}{w-z}\\ 
\phi ''(z)=\frac{1}{b^2}(\frac{1}{b^2}-1)\int_{\Gamma}dw \exp(I) \frac{1}{(w-z)^2}\\ 
\phi '''(z)=-\frac{1}{b^2}(\frac{1}{b^2}-1)(\frac{1}{b^2}-2)\int_{\Gamma}dw \exp(I) \frac{1}{(w-z)^3}
\end{matrix}\right. .
\end{align}
One straightforwardly computes 
\begin{align}
    \phi_1 '(z)&=-\frac{1}{b^4}\int_{\Gamma}dw \exp(I) \frac{1}{w}\frac{1}{w-z}\nonumber\\
               &=\frac{1}{z b^4}\int_{\Gamma}dw \exp(I)(\frac{1}{w}-\frac{1}{w-z})\nonumber\\
               &=\frac{1}{z}(\frac{1}{b^2}\phi_1(z)+\frac{1}{b^2}\phi'(z)). \label{eq:phi1prime}
\end{align}
Similarly, we have
\begin{align}
    \phi_2 '(z)=\frac{1}{z-1 }(\frac{1}{b^2}\phi_2(z)+\frac{1}{b^2}\phi'(z)). \label{eq:phi2prime}
\end{align}
Then, using integration by parts in the expression for $\phi''(z)$ , \eqref{eq:phi1prime}
and \eqref{eq:phi2prime}, one arrives at
\begin{align}
    \phi ''(z)&=\frac{1}{b^2}(\frac{1}{b^2}-1)\int_{\Gamma}dw \exp(I) \frac{1}{(w-z)^2}\nonumber\\
               &=\frac{1}{b^2}\int_{\Gamma} \exp(I') d((w-z)^{\frac{1}{b^2}-1})\nonumber\\
               &=-\frac{1}{b^2}\int_{\Gamma}(w-z)^{\frac{1}{b^2}-1} d(\exp (I'))\text{\space ,where $\exp(I^{'})=e^{\frac{\Lambda w}{b^2}}((w-1)w)^\frac{1}{b^2}$ }\nonumber\\
               &=-\frac{1}{b^2}\int_{\Gamma}dw\exp(I)(\frac{\Lambda}{b^2}+\frac{1}{b^2}\frac{1}{w}+\frac{1}{b^2}\frac{1}{w-1})\frac{1}{(w-z)}\nonumber\\
               &=\frac{\Lambda}{b^2}\phi'(z)-\frac{1}{b^4}\int_{\Gamma}dw\exp(I)(\frac{1}{w}+\frac{1}{w-1})\frac{1}{(w-z)}\nonumber\\
               &=\frac{\Lambda}{b^2}\phi'(z)+\phi_1'(z)+\phi_2'(z)\nonumber\\
               &=\frac{\Lambda}{b^2}\phi'(z)+\frac{1}{b^2(z-1)}\phi_2(z)+\frac{1}{b^2 z}\phi_1(z)+\frac{2 z-1}{b^2(z-1) z}\phi'(z),
\end{align}
which is
\begin{align}
    \phi''(z)=\frac{\Lambda}{b^2} \phi'(z)+\frac{1}{b^2(z-1)}\phi_2(z)+\frac{1}{b^2 z}\phi_1(z)+\frac{2 z-1}{b^2(z-1) z}\phi'(z). \label{eq:phi''}
\end{align}
Finally, by combining equations \eqref{eq:phi1prime}, \eqref{eq:phi2prime}, and \eqref{eq:phi''}, the claim follows.
\end{proof}

Another way to express the above, is the following third order ODE satisfied by $\phi_1$,
\begin{align*}
 &b^3 \left(b^2 (3 z-2)-\Lambda z^2+(\Lambda-4) z+2\right) \phi _1''(z)+\left(b^2-1\right) b \left(b^2-2 \Lambda z+\Lambda-3\right) \phi _1'(z)\\&+\frac{\left(b^2-1\right) \Lambda \phi _1(z)}{b}
 +b^5 (z-1) z \phi _1{}^{(3)}(z)=0,
\end{align*}
which can be easily derived using Theorem \ref{thm:3pointKZ}.
Returning to the matrix form of the equation, we see that the matrix $A(z)$ on the right has a very simple form with poles at $0$ and $1$ together with a constant term which gives rise to an irregular singularity at infinity. The matrix $A(z)$ can be viewed as the connection matrix of our conformal block bundle and in Section \ref{sec:generaldiff} we will show that it indeed satisfies the required flatness conditions. 

\paragraph{Asymptotic series expansion.} In the reminder of this section we will be concerned with solving the equation 
\begin{equation}
    \frac{d \varphi}{dz} = A(z) \varphi,
\end{equation}
in terms of an asymptotic series expansion in the limit $b \rightarrow 0$. To this end, we rewrite $A(z)$ as a series expansion $A(z)=\sum_{r=1}^\infty A_r z^{-r}$, where
\begin{align*}
    A_0=\left(
\begin{array}{ccc}
 \frac{ \Lambda}{b^2} & 0 & 0 \\
 0 & 0 & 0 \\
 0 & 0 & 0 \\
\end{array}
\right), \quad A_1=\left(
\begin{array}{ccc}
 \frac{2}{b^2} & \frac{1}{b^2} & \frac{1}{b^2} \\
 \frac{1}{b^2} & \frac{1}{b^2} & 0 \\
\frac{1}{b^2} & 0 & \frac{1}{b^2} \\
\end{array}
\right),\quad A_r=\left(
\begin{array}{ccc}
 \frac{1}{b^2} & 0 & \frac{1}{b^2} \\
 0 & 0 & 0 \\
 \frac{1}{b^2} & 0 & \frac{1}{b^2} \\
\end{array}
\right) \text{for  $r\geq 2$.}
\end{align*}
We can perform a basis transformation that takes our $A$ matrix to be a block diagonal matrix $B(z)$. Let us denote by $Y$ the column vector satisfying $\varphi=P(z)Y= P(z)\left(
\begin{array}{c}
 Y_1 \\
 Y_2 \\
 Y_3 \\
\end{array}
\right)$ for a suitable $P(z)$, such that $Y$ is subject to the following equation $$\frac{dY}{dz}=B(z)Y,$$ with $B$ a block diagonal matrix of the form $\left(
\begin{array}{ccc}
 B_1 & 0 & 0 \\
 0 & B_2 & B_3 \\
 0 & B_4 & B_5 \\
\end{array}
\right).$ Moreover, we require that $B(z)$ and $P(z)$ have series expansions \begin{align*}
    B_j(z)=\sum_{i=0}^\infty B_{ji} z^{-i}, \quad P(z)=\sum_{i=0}^\infty P_i z^{-i}.
\end{align*}
Now it is easy to see that $A$ and $B$ satisfy the following relationship
\begin{align}
    B=P^{-1}AP-P^{-1}\frac{dP}{dz}.
\end{align}
We can use this equation to solve for the matrices $B$ and $P$ order by order. We have
\begin{align}
    &P_1=\left(
\begin{array}{ccc}
 0 & -\frac{1}{\Lambda } & -\frac{1}{\Lambda } \\
 \frac{1}{\Lambda } & 0 & 0 \\
 \frac{1}{\Lambda } & 0 & 0 \\
\end{array}
\right), \quad P_2=\left(
\begin{array}{ccc}
 0 & \frac{b^2+1}{\Lambda ^2} & \frac{b^4+b^2-\Lambda }{b^2 \Lambda ^2} \\
 \frac{b^2-1}{\Lambda ^2} & 0 & 0 \\
 \frac{b^2 \Lambda +b^2-1}{\Lambda ^2} & 0 & 0 \\
\end{array}
\right),\nonumber\\
&P_3=\left(
\begin{array}{ccc}
 0 & \frac{-2 b^4-3 b^2+\Lambda +1}{\Lambda ^3} & \frac{2 b^2 \Lambda -2 b^6-3 b^4+b^2-\Lambda ^2+\Lambda }{b^2 \Lambda ^3} \\
 \frac{2 b^4-3 b^2-\Lambda -1}{\Lambda ^3} & 0 & 0 \\
 \frac{b^2 \Lambda ^2+2 b^4 \Lambda -b^2 \Lambda +2 b^4-3 b^2-1}{\Lambda ^3} & 0 & 0 \\
\end{array}
\right),
\end{align}
and \begin{align}
    B_0=\left(
\begin{array}{ccc}
 \frac{\Lambda}{b^2} & 0 & 0 \\
 0 & 0 & 0 \\
 0 & 0 & 0 \\
\end{array}
\right), \quad B_1=\left(
\begin{array}{ccc}
 \frac{2}{b^2} & 0 & 0 \\
 0 & \frac{1}{b^2} & 0 \\
 0 & 0 & \frac{1}{b^2} \\
\end{array}
\right), \quad B_2=\left(
\begin{array}{ccc}
 \frac{\Lambda +2}{b^2 \Lambda } & 0 & 0 \\
 0 & -\frac{1}{b^2 \Lambda } & -\frac{1}{b^2 \Lambda } \\
 0 & -\frac{1}{b^2 \Lambda } & \frac{\Lambda -1}{b^2 \Lambda } \\
\end{array}
\right),
\end{align}
and so on. Since $B_1$ is diagonal, we can solve for $Y_1$ by directly integrating as follows
\begin{align}
Y_1(z)=e^{\int_0^z B_1(t)dt}=e^{\frac{\Lambda}{b^2} z}z^{\frac{2}{b^2}}y_1(z),
\end{align}
where $y_(z)=1+\sum_{i=1}^\infty y_i z^{-i}\nonumber$ is determined by the matrix elements $B_{1j}$, giving
\begin{align}
    y_1(z)=\exp\left(\int_0^z \sum_{i=2}^\infty B_{1i}t^{-i} dt\right)=\exp\left( \sum_{i=2}^\infty \frac{1}{-i+1} B_{1i} z^{-i+1}\right)~.
\end{align}
We can get the series expansion of $y_1(z)$ from the above equation by substituting
\begin{align}
    &B_{12}=\frac{\Lambda+2}{b^2\Lambda},B_{13}=\frac{b^2 \Lambda ^2+b^4 \Lambda +2 b^4-2 b^2+\Lambda }{b^4 \Lambda ^2}\nonumber,\\
    &B_{14}=\frac{b^4 \Lambda ^2+b^2 \Lambda ^3+b^2 \Lambda ^2+2 b^6 \Lambda -b^4 \Lambda +4 b^6-6 b^4-2 b^2+\Lambda ^2-\Lambda }{b^4 \Lambda ^3},\nonumber\\
    &B_{15}=\frac{1}{b^4 \Lambda ^4}(-3 \Lambda-\Lambda ^2+\Lambda ^3+\text{higher order terms in $b$}),\nonumber\\
    &B_{16}=\frac{1}{b^6 \Lambda^5}(-\Lambda ^2+\text{higher order terms in $b$}).
    %B_{16}=\frac{1}{b^2 c^5}\left(20 b^4 c^2+10 b^2 c^3-20 b^2 c^2+30 b^6 c-55 b^4 c-20 b^2 c+48 b^8-100 b^6-30 b^4+80 b^2+c^5+5 c^4-5 c^3-10 c^2+25 c-6\right.
\end{align}
In the following, we will occasionally also use the notation «$\cdots$» for higher order terms in $b$. Thus, further expanding on the exponent, we have
\begin{align}
    y_1(z)&=1+B_{12}\frac{1}{z}+\left(\frac{B_{13}}{-2}+\frac{1}{2!}B_{12}\right)\frac{1}{z^2}+\left(\frac{B_{14}}{-3}+\frac{1}{3!}B_{12}\right)\frac{1}{z^3}\nonumber\\
    &+\left(\frac{B_{15}}{-4}+\frac{1}{2!}\frac{B_{13}}{-2}+\frac{1}{4!}B_{12}\right)\frac{1}{z^4}+\left(\frac{B_{16}}{-5}+\frac{1}{5!}B_{12}\right)\frac{1}{z^5}+\text{higher order terms}
\end{align}
Analogously, we can solve for $Y_2$ and $Y_3$ using the same method, after employing the following transformation to diagonalize $B_2$,
\begin{align}
 \left(
\begin{array}{c}
 Y_2 \\
 Y_3 \\
\end{array}
\right)=\left(
\begin{array}{ccc}
   z^{\frac{1}{b^2}}\frac{1}{2} \left(\sqrt{\Lambda^2+4}+\Lambda\right)& z^{\frac{1}{b^2}}\frac{1}{2} \left(\Lambda-\sqrt{\Lambda^2+4}\right) \\
 z^{\frac{1}{b^2}} & z^{\frac{1}{b^2}} \\
\end{array}
\right)\left(
\begin{array}{c}
 Z_2 \\
 Z_3 \\
\end{array}
\right),
\end{align}
such that $Z$ satisfies the matrix equation
\begin{align}
    \frac{d}{dz}Z=C(z)Z, \quad C(z)=\sum_{i=2}^\infty C_i z^{-i},
\end{align}
with $C_2$ diagonal,
\begin{align}
    &C_2=\left(
\begin{array}{cc}
 -\frac{\sqrt{\Lambda^2+4}-\Lambda+2}{2 b^2 \Lambda} & 0 \\
 0 & \frac{\sqrt{\Lambda^2+4}+\Lambda-2}{2 b^2 \Lambda} \\
\end{array}
\right).
\end{align}
Higher order terms will have the form
\begin{align}
    C_3=\left(
\begin{array}{cc}
\frac{-\sqrt{\Lambda ^2+4}+\Lambda -2}{2 b^4 \Lambda  \sqrt{\Lambda ^2+4}}+... & \frac{-\sqrt{\Lambda ^2+4}+\Lambda -2}{2 b^4 \Lambda  \sqrt{\Lambda ^2+4}}+...\\
 \frac{\frac{\Lambda }{\sqrt{\Lambda ^2+4}}-\frac{2}{\sqrt{\Lambda ^2+4}}+1}{2 b^4 \Lambda }+... & -\frac{\sqrt{\Lambda ^2+4}+\Lambda -2}{2 b^4 \Lambda  \sqrt{\Lambda ^2+4}}+... \\
\end{array}
\right),~\cdots
\end{align}
% \begin{align}
%     C_3=\left(
% \begin{array}{cc}
%  \frac{2 b^2 \left(\sqrt{c^2+4}+2\right)-c^3+\left(\sqrt{c^2+4}+2\right) c^2-2 \left(\sqrt{c^2+4}+2\right) c+2 \left(\sqrt{c^2+4}+2\right)}{2 b^2 c^2 \sqrt{c^2+4}} & \frac{\left(b^2-1\right) \left(\sqrt{c^2+4}-c\right)}{2 b^2 c \sqrt{c^2+4}} \\
%  \frac{\left(b^2-1\right) \left(\sqrt{c^2+4}+c\right)}{2 b^2 c \sqrt{c^2+4}} & \frac{2 b^2 \left(\sqrt{c^2+4}-2\right)+c^3+\left(\sqrt{c^2+4}-2\right) c^2-2 \left(\sqrt{c^2+4}-2\right) c+2 \left(\sqrt{c^2+4}-2\right)}{2 b^2 c^2 \sqrt{c^2+4}} \\
% \end{array}
% \right)
% \end{align}
With the above, we can also solve $Z_2$ and $Z_3$ recursively, in terms of regular Taylor series expansions in $\frac{1}{z}$. For example, assuming the following expansion of $Z_i$,
\begin{align}
Z_i=1+\sum_{j=1}^5 Z_{ij}z^{-j}+\text{higher order terms},
\end{align}
and substituting back into the original equation, we arrive at a recursive equation given order by order as
\begin{align}
\left\{\begin{matrix}
-jZ_{1j}=\sum_{i=0}^{j-1}(C_{j+1-i}^{1,1}Z_{1i}+C_{j+1-i}^{1,2}Z_{2i})\\
-jZ_{2j}=\sum_{i=0}^{j-1}(C_{j+1-i}^{2,1}Z_{1i}+C_{j+1-
i}^{2,2}Z_{2i})
\end{matrix}\right. ,
\end{align}
where $C_i^{j,k}$ denotes the $(j,k)$-th component of $C_i$. 
Thus we can solve for the  expression for $Z_{ij}$ in terms of the $C_i$ matrices. For example: $Z_{11}=C_2^{1,1}, Z_{21}=C_2^{2,2}$. For higher orders we have
\begin{align}
&Z_{12}=\frac{1}{4 b^4  \left(\Lambda ^2+4\right)}\Lambda ^2+...,\\
    &Z_{13}= -\frac{1}{12 b^6  \left(\Lambda ^2+4\right)}\Lambda ^2+...,\\
     &Z_{14}= -\frac{1}{48 b^8  \sqrt{\Lambda ^2+4}}\Lambda +...,\\
      &Z_{15}=\frac{1}{240 b^{10}  \sqrt{\Lambda ^2+4}}\Lambda +...,
\end{align}
and similarly
\begin{align}
  &Z_{22}=\frac{1}{4 b^4  \left(\Lambda ^2+4\right)}\Lambda ^2+...,\\
    &Z_{23}= \frac{1}{12 b^6  \left(\Lambda ^2+4\right)}\Lambda ^2+...,\\
     &Z_{24}= \frac{1}{48 b^8  \sqrt{\Lambda ^2+4}}\Lambda +...,\\
      &Z_{25}=-\frac{1}{240 b^{10}  \sqrt{\Lambda ^2+4}}\Lambda +...
\end{align}
Thus, from the above we can obtain the expressions for $Y_2$ and $Y_3$, namely
\begin{align}
    Y_2= z^\frac{1}{b^2}y_2(x), \quad Y_3= z^\frac{1}{b^2}y_3(x).
\end{align}
Here $y_i(x)$, $i=2,3$ have Taylor series expansions in terms of $\frac{1}{z}$: $y_i(x)=\sum_{j=0}^\infty  y_{ij}x^{-j}$ and is determined by $Z_i$ in the following way:
\begin{align}
    &y_{2i}=\frac{1}{2}(\sqrt{\Lambda^2+4}+\Lambda)Z_{1i}+\frac{1}{2}(-\sqrt{\Lambda^2+4}+\Lambda)Z_{2i}~,\\
     &y_{3i}=Z_{1i}+Z_{2i}~.
\end{align}
Summarizing, we finally arrive at the following form of the solution for the original equation
\begin{align}
    \begin{pmatrix}
\phi' \\ 
\phi_1\\
\phi_2  
\end{pmatrix}=P(z)\left(
\begin{array}{c}
 e^{\frac{\Lambda z}{b^2}}z^{\frac{2}{b^2}} y_1(z)  \\
 z^{\frac{1}{b^2}} y_2(z) \\
 z^{\frac{1}{b^2}} y_3(z) \\
\end{array}
\right).
\end{align}

\subsection{Flat connections from n-point functions}
\label{sec:generaldiff}

In the following, we generalize the above discussion to the case of arbitrary $n-$point functions corresponding to the Coulomb gas integrand \eqref{Correlationfunction}, using the same technique.

The superpotential we are interested in is of the following form:
\begin{align}
   \exp( \frac{1}{b^2}\mathcal{W}(w,z_1,z_2,...,z_n))=\prod_{i}((w-z_i)^{\frac{k_i}{b^2}})\prod_{i\neq j}(z_i-z_j)^{\frac{-k_i k_j}{2b^2}}\exp(\frac{\Lambda}{ b^2}(w-\frac{1}{2}\sum_i k_i z_i)).
\end{align}
The corresponding conformal blocks are then obtained by integration and can be written as below:
\begin{align*}
    \mathcal{F}&=\prod_{i}\exp(-\frac{\Lambda}{2 b^2}k_i z_i)\prod_{i\neq j}(z_i-z_j)^{\frac{-k_i k_j}{2b^2}}\int dw \prod_{i}((w-z_i)^{\frac{k_i}{b^2}})\exp(\frac{\Lambda}{ b^2}w)\\ &\equiv\prod_{i}\exp(-\frac{\Lambda}{2 b^2}k_i z_i)\prod_{i\neq j}(z_i-z_j)^{\frac{-k_i k_j}{2b^2}} \int dw \exp(I)\\&\equiv \prod_{i}\exp(-\frac{\Lambda}{2 b^2}z_i)\prod_{i\neq j}(z_i-z_j)^{\frac{-k_i k_j}{2b^2}} \phi(z),
\end{align*}
with $\phi(z)$ of the form \eqref{eq:phin-point}. We then have the following
\begin{theorem}
    Let $\Psi=\begin{pmatrix}
\frac{\partial \phi}{\partial z_1}\\ ...
\\\frac{\partial \phi }{\partial z_n}\\
\end{pmatrix}=\begin{pmatrix}
\Psi_1\\ ...
\\\Psi_n
\end{pmatrix}$.
Then the above implies the following equation satisfied by $\Psi$, namely
\begin{align} \label{eq:irrKZ}
\frac{\partial }{\partial z_i}\Psi=A_i \Psi +\sum_{j\neq i}\frac{\Omega_{ij}}{z_i-z_j}\Psi,
\end{align}
where we have defined 
\begin{align} \label{eq:A}
    A_i=
    \begin{blockarray}{cccccc}
 &  & i &  &   \\
\begin{block}{(ccccc)c}
 0 &...  &0  &...  & 0 &\\ 
... &...  &...  &...  &... &\\ 
0 &...  & \frac\Lambda {b^2} &...  & 0&i\\ 
... & ... &...  &...  &...& \\ 
0& ... & 0 &...  &0 &\\ 
\end{block}
\end{blockarray},
\end{align}
and
\begin{align} \label{eq:Omega}
   \Omega_{ij}=
    \begin{blockarray}{ccccccc}
 &  ...& j & ... & i  &... \\
\begin{block}{(cccccc)c}
 0 &0  &0  &...  & 0 &0&\\ 
0 &0  &0  &...  &0 &0&...\\ 
0 & ... &\frac{k_i}{b^2}  &...  &-\frac{k_j}{b^2}&0 &j \\ 
0 &...  & 0 &...  & ...&0 &...\\ 
0& ... & -\frac{k_i}{b^2} &...  &\frac{k_j}{b^2} &0 &i\\
0& 0 & ... &...  &0 &0 &...\\ 
\end{block}
\end{blockarray} 
\end{align}
Furthermore, the connection defined by the $\Omega_{ij}$ is flat.
\end{theorem}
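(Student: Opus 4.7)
The plan is to proceed in two stages: first deriving the stated ODE system by direct differentiation of the integral representation, and then verifying the flatness conditions by a direct matrix computation using the explicit form of $A_i$ and $\Omega_{ij}$.

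To derive the system, write $I = \frac{1}{b^2}\bigl[\Lambda w + \sum_k k_k \log(w - z_k)\bigr]$ so that $\Psi_i = \partial_i \phi = -\frac{k_i}{b^2}\int_\Gamma (w-z_i)^{-1} e^I\, dw$. The off-diagonal entries $\partial_j \Psi_i$ with $i \neq j$ follow from differentiating once more and applying the partial-fraction identity $\frac{1}{(w-z_i)(w-z_j)} = \frac{1}{z_i-z_j}\bigl[\frac{1}{w-z_i} - \frac{1}{w-z_j}\bigr]$, which re-expresses the resulting integral as precisely the combination of $\Psi_i, \Psi_j$ dictated by $\Omega_{ij}/(z_i-z_j)$. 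For the diagonal entry $\partial_i \Psi_i$, the key step is integration by parts: one writes $(w-z_i)^{-2}\, dw = -d\bigl[(w-z_i)^{-1}\bigr]$ and transfers the derivative onto $e^I$ using $\partial_w e^I = \bigl[\Lambda/b^2 + \sum_k (k_k/b^2)/(w-z_k)\bigr] e^I$, which is legitimate because $\Gamma$ is a Lefschetz thimble for $\mathcal{W}_{\mathrm{irr}}$ on which boundary contributions decay. This produces a term proportional to $(\Lambda/b^2)\Psi_i$, giving rise to $A_i$, a self-dual $\int(w-z_i)^{-2}e^I\, dw$ term that one moves to the left-hand side, and a sum of cross-pole terms to which partial fractions are applied once more. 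Matching coefficients against \eqref{eq:A} and \eqref{eq:Omega} then yields the asserted first-order system.

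For flatness, the connection one-form is $\omega = \sum_i B_i\, dz_i$ with $B_i = A_i + \sum_{j\neq i}\Omega_{ij}/(z_i-z_j)$, and the integrability condition reads $\partial_i B_j - \partial_j B_i + [B_j, B_i] = 0$. The symmetry $\Omega_{ij} = \Omega_{ji}$, which one reads off directly from \eqref{eq:Omega}, immediately forces $\partial_i B_j = \partial_j B_i$, reducing the question to $[B_i, B_j] = 0$ for all $i \neq j$. Expanding this commutator gives rise to five types of algebraic conditions: $[A_i, A_j] = 0$ because the $A_i$ are diagonal with mutually disjoint support; $[A_i, \Omega_{jk}] = 0$ whenever $i \notin \{j,k\}$ by the same disjoint-support argument; $[\Omega_{ij}, \Omega_{kl}] = 0$ when $\{i,j\}\cap\{k,l\} = \emptyset$ for identical reasons; $[A_i + A_j, \Omega_{ij}] = 0$ because $A_i + A_j$ restricted to the subspace $\mathrm{span}(e_i, e_j)$ equals the scalar $(\Lambda/b^2)\, I_2$, which is central and therefore commutes with anything supported on that block; and finally the infinitesimal braid relation $[\Omega_{ij}, \Omega_{ik} + \Omega_{jk}] = 0$ for distinct $i, j, k$, which is a classical Kohno-type identity verified by direct matrix computation within the subspace $\mathrm{span}(e_i, e_j, e_k)$.

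The main obstacle I anticipate is bookkeeping. When expanding $[B_i, B_j]$, one must correctly identify which rational prefactor --- either $1/(z_j-z_i)$ coming from the $[A,\Omega]$ cross terms, or products $1/((z_i-z_p)(z_j-z_q))$ from the $[\Omega,\Omega]$ double sum --- each commutator contributes to, and only after gathering contributions with matching pole structure does the vanishing of the five algebraic relations above imply the vanishing of the full rational matrix-valued two-form. This combinatorial organization mirrors exactly the classical KZ flatness argument; the genuinely new ingredient in the irregular case is the mixed condition $[A_i + A_j, \Omega_{ij}] = 0$, which governs the compatibility of the irregular term at infinity with the regular singularities at $z_i = z_j$ and simplifies dramatically thanks to the scalar structure of $A_i + A_j$ on the two-dimensional block where $\Omega_{ij}$ is supported.
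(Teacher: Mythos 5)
Your derivation of the first-order system is essentially the paper's: the same differentiation under the integral, the same partial-fraction identity for the mixed derivatives $\partial_i\partial_j\phi$, and the same integration by parts for the diagonal term (you keep $(w-z_i)^{-1}$ inside the exact differential and solve for the resulting self-referential $\int (w-z_i)^{-2}e^I\,dw$ term, whereas the paper absorbs the full power $(w-z_i)^{k_i/b^2-1}$ into the differential and avoids that division — a cosmetic difference, valid for generic $b$ — and both rely on vanishing boundary contributions on the thimble, which you make explicit). Where you genuinely diverge is the flatness argument, and there your version is the more complete one. The paper only records and verifies the Kohno relations $\Omega_{ij}=\Omega_{ji}$, $[\Omega_{ij}+\Omega_{jk},\Omega_{ik}]=0$, $[\Omega_{ij},\Omega_{kl}]=0$ (explicitly for $n=3$, then by induction), i.e.\ the flatness of the purely regular part of the connection; it never addresses the cross terms between the constant matrices $A_i$ and the $\Omega$'s. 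Your decomposition of $[B_i,B_j]$ with $B_i=A_i+\sum_{j\neq i}\Omega_{ij}/(z_i-z_j)$ correctly isolates the one genuinely new condition of the irregular case, $[A_i+A_j,\Omega_{ij}]=0$, and your reason for it is right: $\Omega_{ij}$ is supported on the $2\times 2$ block spanned by $e_i,e_j$, where $A_i+A_j$ acts as the scalar $\Lambda/b^2$. Together with the disjoint-support observations for $[A_i,\Omega_{jk}]$ and $[A_i,A_j]$, this closes a step the paper leaves implicit, so your argument actually establishes the integrability of the full connection \eqref{eq:irrKZ} rather than only of its $\Omega$ part. The remaining pole-structure bookkeeping you describe is the standard Arnold-identity organization of the KZ flatness proof and poses no difficulty here.
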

\begin{proof}
We can get the differential equation satisfied by derivatives of $\phi$, by considering the following relation
\begin{align}
     &\frac{\partial \phi}{\partial z_i}=-\frac{k_i}{b^2}\int dw \exp(I)\frac{1}{w-z_i}, \\
     &\frac{\partial^2 \phi}{\partial z_i^2}=\frac{k_i}{b^2}\int  \exp(I^{'})d((w-z_i)^{\frac{k_i}{b^2}-1})\nonumber\\
     &=-\frac{k_i}{b^2}\int d(\exp(I^{'}))(w-z_i)^{\frac{k_i}{b^2}-1},\text{where exp$(I^{'})=\prod_{j\neq i}((w-z_j)^{\frac{k_j}{b^2}})\exp(\frac{\Lambda}{ b^2}w)$}\nonumber\\
     &=-\frac{k_i}{b^2}(-\frac{\Lambda}{ b^2}\frac{b^2}{k_i}\frac{\partial \phi}{\partial z_i}+\sum_{j\neq i}\frac{k_j}{b^2}(\int dw \exp(I)\frac{1}{w-z_i}\frac{1}{w-z_j}))\nonumber\\
     &=\frac{\Lambda}{ b^2}\frac{\partial \phi}{\partial z_i}-\frac{k_i k_j}{b^4}\sum_{j\neq i}\frac{1}{z_i-z_j}\int dw \exp(I)(\frac{1}{w-z_i}-\frac{1}{w-z_j})\nonumber\\
     &=\frac{\Lambda}{ b^2}\frac{\partial \phi}{\partial z_i}+\sum_{j\neq i}\frac{1}{z_i-z_j}(\frac{k_j }{b^2}\frac{\partial \phi}{\partial z_i}-\frac{ k_i}{b^2}\frac{\partial \phi}{\partial z_j}). \label{eq:phidoubleprime}
\end{align}
Similarly, we have by direct evaluation
\begin{align}
    \frac{\partial^2 \phi}{\partial z_i \partial z_j}=\frac{k_i k_j}{b^4}\int dw \exp(I)\frac{1}{w-z_i}\frac{1}{w-z_j}=\frac{1}{z_i-z_j}(\frac{k_j }{b^2}\frac{\partial \phi}{\partial z_i}-\frac{ k_i}{b^2}\frac{\partial \phi}{\partial z_j}).
\end{align}
We can then combine the above relations into one vector-valued equation proving the first part of the claim. Furthermore, we can show that in our case with $i,j\in\left\{ 1,2,3 \right \}$, the matrices $\Omega_{ij}$ will satisfy the flatness condition that should also be satisfied  in KZ equations and is an integrability condition for \eqref{eq:irrKZ}
 \begin{align}
     \Omega_{i,j}&=\Omega_{j,i},\\
 \left[\Omega_{i,j}+\Omega_{j,k},\Omega_{i,k}  \right ]&=0,\text{when $i,j,k$ are distinct.}\\
     \left [\Omega_{i,j},\Omega_{k,l}  \right ]&=0,\text{when $i,j,k,l$ are distinct.}
 \end{align}
For our particular case, we have
\begin{align}
    \Omega _{1,2}&= \Omega _{2,1}=\frac{1}{b^2}\left(
\begin{array}{ccc}
 k_2 & -k_1 & 0 \\
 -k_2 & k_1 & 0 \\
 0 & 0 & 0 \\
\end{array}
\right), \\
\Omega _{2,3}&=\Omega _{3,2}=\frac{1}{b^2}\left(
\begin{array}{ccc}
 0 & 0 & 0 \\
 0 & k_3 & -k_2 \\
 0 & -k_3 & k_2 \\
\end{array}
\right),\\
\Omega _{1,3}&=\Omega _{3,1}=\frac{1}{b^2}\left(
\begin{array}{ccc}
 k_3 & 0 & -k_1 \\
 0 & 0 & 0 \\
 -k_3 & 0 & k_1 \\
\end{array}
\right),
\end{align}
and one can easily deduce that
\begin{align}
    &\left [\Omega_{1,2}+\Omega_{2,3},\Omega_{1,3}  \right ]=0,\\
    &\left [\Omega_{1,3}+\Omega_{3,2},\Omega_{1,2}  \right ]=0,
\end{align}
meaning that the flatness condition also holds. This concludes the proof for the case of $3$-point functions. Using this result, it is then easy to inductively show the flatness condition for arbitrary $n$-point functions. 
\end{proof}
In this sense, we can see that this has a form resembling the KZ equation. This implies that the solutions of the equations \eqref{eq:irrKZ} are integrable, meaning that they give rise to $n$ different, globally independent solutions, which will correspond to our conformal blocks! 

Furthermore, we can check this result through counting the number of the Lefschetz thimbles. Here the number of Lefschetz thimbles , which corresponds to the number of BPS solitons of Landau-Ginzburg model, or the number of solutions of $\phi$, are given as the critical points of the superpotential $W$ with respect to $w$. And in our case, we have:
\begin{align}
    \frac{\partial W}{\partial w}=0\Leftrightarrow \sum_i \frac{k_i }{b^2}\frac{1}{w-z_i}+\frac{\Lambda}{b^2}=0.
\end{align}
This is a degree $n$ polynomial equation of $w$, which pocesses $n$ different solutions generically. That is in accordance with our KZ approach. 
\section{Conclusions}
\label{sec:conclusions}

In this paper we have studied Liouville conformal blocks with irregular vertex operators. Among the main results is the derivation of a flat connection characterizing sections of the corresponding irregular conformal block bundles. We find that these connections satisfy the flatness condition and are hence integrable. The corresponding ordinary differential equations resemble the KZ equations for WZW conformal blocks \cite{KNIZHNIK198483} up to a constant term which represents the irregular singularity at infinity. The corresponding quantum corrected level $\kappa=k+h^{\vee}$\footnote{$h^{\vee}$ is the dual Coxeter number of the symmetry group $G$ of the corresponding WZW model.} arising in KZ equations is replaced in our setup by the parameter $b^2$. Our asymptotic series expansions are therefore valid in the $b \rightarrow 0$ limit. Given the similarity to KZ equations, it is tempting to ask whether our irregular operators can be identified with irregular Kac-Moody representations in any way. We leave this task to future work. Moreover, it is desirable, given the explicit form of our connection, to derive monodromy representations for our conformal blocks. This can be achieved by finding all local monodromies, and transport matrices, as well as Stokes matrices at the irregular singularity which would then together fully fix the monodromy around singular points in moduli space \cite{JIMBO1981306}. From a more mathematical perspective, it would be interesting to connect to the works \cite{Feigin_2010a,Feigin_2010b,XuQG} where opers with irregular singularities and corresponding flat connections are studied. 

\section*{Acknowledgements}
We would like to thank Sergio Cecotti, Xia Gu, Sergei Gukov, Mauricio Romo, and Youran Sun  for valuable discussions. B.H. would also like to thank the Max-Planck Institute for Mathematics in Bonn, where part of this work was completed, for hospitality and financial support. B.H. and Y.L. were supported by the NSFC grant 12250610187, N.R. was supported by the Changjiang fund, and by BMSTC and ACZSP (Grant no. Z221100002722017).

\appendix

\section{Global Conformal Invariance and BPZ equation for general charges}
\label{sec:globalinv}
In this Appendix we show that the function $\mathcal{F}$ with expression: 
\begin{align}\label{eq:F3point}
    \mathcal{F}&=\Lambda^\beta \braket{V_{\frac{-1}{2b}}(z)V_{\frac{-k_0}{2b}}(z_0)V_{\frac{-k_1}{2b}}(z_1)I_{\alpha,\Lambda}(\infty)}\nonumber\\
    &=\Lambda^\beta(z-z_0)^{-\frac{k_0}{2 b^2}} (z-z_1)^{-\frac{k_1 k_0}{2 b^2}}(z_1-z_0)^{-\frac{1}{2 b^2}} e^{-\frac{\Lambda (z+k_1 z_1+k_0 z_0)}{2 b^2}}
    \nonumber\\
    &\times \int_{\Gamma} dw \exp \left(\frac{\Lambda w+\log (w-z)+k_1\log (w-z_1)+k_0\log (w-z_0)}{b^2}\right)\nonumber\\
    &\equiv f(z,z_0,z_1,\Lambda)\times \phi(z,z_0,z_1,\Lambda).
\end{align}
satisfies the primitive BPZ equation\eqref{eq:irrBPZ1}, the global conformal invariance equations for $L_{-1}$ \eqref{eq:L1inv} and $L_0$ \eqref{eq:L0inv}, thus the complete BPZ equation\eqref{eq:irrBPZ}. This will also allow us to fix the numerical constant $\beta$ in the overall $\Lambda^{\beta}$ factor. Notice that  we performed the rescaling \eqref{coordtrans}, 
so now  $c$ under the transformation should be  $\frac{\Lambda}{b}$.

Firstly, we  have the primitive BPZ equation:
  \begin{align}\label{eq:genirrBPZ1}
    \left((b^2\frac{\partial^2}{\partial z^2}+\frac{\Delta(k_1)}{(z_1-z)^2}+\frac{\Delta(k_0)}{(z_0-z)^2}-\frac{1}{z_1-z}\frac{\partial}{\partial z_1}-\frac{1}{z_0-z}\frac{\partial}{\partial z_0})-\frac{\Lambda^2}{4b^2}\right)f\phi=0,
\end{align}
where $\Delta(k_i)=-\frac{k_i}{2b}(Q+\frac{k_i}{2b})$. The proof is similar as that in \eqref{eq:irrBPZ2}. We just notice:
\begin{align}
     &\left((b^2\frac{\partial^2}{\partial z^2}+\frac{\Delta(k_1)}{(z_1-z)^2}+\frac{\Delta(k_0)}{(z_0-z)^2}-\frac{1}{z_1-z}\frac{\partial}{\partial z_1}-\frac{1}{z_0-z}\frac{\partial}{\partial z_0})-\frac{\Lambda^2}{4b^2}\right)f=0\\
     &  b^2\frac{\partial^2 \phi}{\partial z^2}=\Lambda \frac{\partial \phi}{\partial z}+\frac{1}{z-z_0}(k_0\frac{\partial \phi}{\partial z}-\frac{\partial \phi}{\partial z_0})+\frac{1}{z-z_1}(k_1\frac{\partial \phi}{\partial z}-\frac{\partial \phi}{\partial z_1})\\
      &\frac{\partial f}{\partial z}=-\frac{1}{2b^2}(\Lambda+\frac{k_0}{z-z_0}+\frac{k_1}{z-z_1})f
\end{align}
Now let us start with the derivation of $L_{-1}$ invariance, namely
\begin{align}\label{eq:genirrBPZ2}
    \left(\frac{\partial}{\partial z}+\frac{\partial}{\partial z_0}+\frac{\partial}{\partial z_1}+\frac{\Lambda}{b}(\alpha-Q) \right)\mathcal{F}=0.
\end{align}
To prove this, we follow our usual approach and apply the derivatives to our integral representation of $\mathcal{F}$. This gives
\begin{align}
    &\left(\frac{\partial}{\partial z}+\frac{\partial}{\partial z_0}+\frac{\partial}{\partial z_1} \right)\mathcal{F}= \left(\frac{\partial}{\partial z}+\frac{\partial}{\partial z_0}+\frac{\partial}{\partial z_1} \right)(f\phi) \nonumber\\
    &=\phi \left(\frac{\partial}{\partial z}+\frac{\partial}{\partial z_0}+\frac{\partial}{\partial z_1} \right)f+f\left(\frac{\partial}{\partial z}+\frac{\partial}{\partial z_0}+\frac{\partial}{\partial z_1} \right)\phi,
\end{align}
and we notice
\begin{align}
    &\frac{\partial f}{\partial z}=-\frac{1}{2b^2}(\Lambda+\frac{k_0}{z-z_0}+\frac{k_1}{z-z_1})f,\\
    &\frac{\partial f}{\partial z_1}=\frac{1}{2b^2}(\frac{k_1 f}{z-z_1}+\frac{k_1 k_0 f}{z_0-z_1})-\frac{k_1\Lambda}{2b^2}f,\\
    &\frac{\partial f}{\partial z_0}=\frac{1}{2b^2}(\frac{k_0 f}{z-z_0}+\frac{k_1 k_0 f}{z_1-z_0})-\frac{\Lambda}{2b^2}f,
\end{align}
thus giving
\begin{align}
    \frac{\partial f}{\partial z}+\frac{\partial f}{\partial z_1}+\frac{\partial f}{\partial z_0}=-\frac{(1+k_0+k_1)\Lambda}{2b^2} f.
\end{align}
After multiplication by $\phi$ this becomes
\begin{align}
  \frac{(1+k_0+k_1)\Lambda}{2b^2}f\phi +\phi(\frac{\partial f}{\partial z}+\frac{\partial f}{\partial z_1}+\frac{\partial f}{\partial z_0})=0.
\end{align}
On the other hand, using differentiation under the integral sign, we have the identity
\begin{align}
    &\left(\frac{\partial}{\partial z}+\frac{\partial}{\partial z_0}+\frac{\partial}{\partial z_1} \right)\phi= \left(\frac{\partial}{\partial z}+\frac{\partial}{\partial z_0}+\frac{\partial}{\partial z_1} \right)\int dw\exp(I)\nonumber\\
    &=\frac{1}{b^2}\int dw\exp(I)\left(-\frac{1}{w-z}-\frac{k_0}{w-z_0}-\frac{k_1}{w-z_1}\right)\nonumber\\
    &=\int dw\exp(I)\left(-\frac{\Lambda}{b^2}-\frac{1}{b^2}\left(-\frac{1}{w-z}-\frac{k_0}{w-z_0}-\frac{k_1}{w-z_1}\right)\right)+\frac{\Lambda}{b^2}\int dw\exp(I)\nonumber\\
    &=-\int d(\exp(I))+\frac{\Lambda}{b^2}\phi\nonumber\\
    &=\frac{\Lambda}{b^2}\phi,
\end{align}
which is just
\begin{align}
    \left(\frac{\partial}{\partial z}+\frac{\partial}{\partial z_0}+\frac{\partial}{\partial z_1} \right)\phi=\frac{\Lambda}{b^2}\phi.
\end{align}
Thus
\begin{align}
    \left(\frac{\partial}{\partial z}+\frac{\partial}{\partial z_0}+\frac{\partial}{\partial z_1}+\frac{\Lambda}{b}\left(\frac{1+k_0+k_1}{2b}-\frac{1}{b}\right) \right)\mathcal{F}=0.
\end{align}
 In order to check this relation, we first need to find the correct charge $\alpha$. By comparison we know that the constraint on $\alpha$ is
 \begin{align}
     \alpha-Q=\frac{1+k_0+k_1}{2b}-\frac{1}{b}\Leftrightarrow\alpha=b+\frac{1+k_0+k_1}{2b}.
 \end{align}
We now show that the $3$ point correlator will satisfy the $L_0$ invariance with the proper choice of $\beta$, namely
\begin{align}\label{eq:genirrBPZ3}
    \left(z\frac{\partial}{\partial z}+z_0\frac{\partial}{\partial z_0}+z_1\frac{\partial}{\partial z_1}+\Delta(1)+\Delta(k_1)+\Delta(k_0) -(\Lambda\frac{\partial}{\partial \Lambda}+\Delta_\alpha)\right)\mathcal{F}=0.
\end{align}
First of all, we just apply the operator $\left(z\frac{\partial}{\partial z}+z_0\frac{\partial}{\partial z_0}+z_1\frac{\partial}{\partial z_1}\right)$ to integral representation, giving
\begin{align}\label{L_0diff}
     \left(z\frac{\partial}{\partial z}+z_0\frac{\partial}{\partial z_0}+z_1\frac{\partial}{\partial z_1} \right)(f\phi)=\phi\left(z\frac{\partial}{\partial z}+z_0\frac{\partial}{\partial z_0}+z_1\frac{\partial}{\partial z_1} \right)f+f\left(z\frac{\partial}{\partial z}+z_0\frac{\partial}{\partial z_0}+z_1\frac{\partial}{\partial z_1} \right)\phi.
\end{align}
Notice that
\begin{align}
   (z\frac{\partial}{\partial z}+z_0\frac{\partial}{\partial z_0}+z_1\frac{\partial}{\partial z_1})f=(-\frac{k_0+k_1+k_0k_1}{2b^2}+\Lambda \frac{\partial}{\partial \Lambda}-\beta)f,
\end{align}
and also
\begin{align}
   &(z\frac{\partial}{\partial z}+z_0\frac{\partial}{\partial z_0}+z_1\frac{\partial}{\partial z_1})\phi=\frac{1}{b^2}\int dw \exp(I)(-\frac{k_0 z_0}{w-z_0}-\frac{k_1 z_1}{w-z_1}-\frac{z}{w-z})\nonumber\\
   &=\frac{1+k_0+k_1}{b^2}\phi-\frac{1}{b^2}\int dw \exp(I)w(\frac{k_0}{w-z_0}+\frac{k_1}{w-z_1}+\frac{1}{w-z})\nonumber\\
   &=\frac{1+k_0+k_1}{b^2}\phi+\int dw \exp(I) \frac{w\Lambda}{b^2}-\frac{1}{b^2}\int dw \exp(I)w(\Lambda+\frac{k_0}{w-z_0}+\frac{k_1}{w-z_1}+\frac{1}{w-z})\nonumber\\
   &=\frac{1+k_0+k_1}{b^2}\phi+\Lambda\frac{\partial }{\partial\Lambda}\int dw \exp (I)-\frac{\partial }{\partial\Lambda}\int dw \exp(I)(\Lambda+\frac{k_0}{w-z_0}+\frac{k_1}{w-z_1}+\frac{1}{w-z})\nonumber\\
   &=\frac{1+k_0+k_1}{b^2}\phi+\Lambda\frac{\partial }{\partial\Lambda}\int dw \exp (I)-b^2\frac{\partial }{\partial\Lambda}\int d(\exp(I))\nonumber\\
   &=\frac{1+k_0+k_1}{b^2}\phi+\Lambda\frac{\partial }{\partial\Lambda}\phi.
\end{align}
Combining these two equations finally gives
\begin{align}
    \left(z\frac{\partial}{\partial z}+z_0\frac{\partial}{\partial z_0}+z_1\frac{\partial}{\partial z_1} \right)(f\phi)&=\phi(-\frac{k_0+k_1+k_0k_1}{2b^2}+\Lambda \frac{\partial}{\partial \Lambda}-\beta)f +f(\frac{1+k_0+k_1}{b^2}\phi+\Lambda\frac{\partial }{\partial\Lambda})\phi \nonumber\\
     &=\Lambda\frac{\partial }{\partial\Lambda}(f\phi)-(\beta+\frac{k_0+k_1+k_0k_1}{2b^2}-\frac{1+k_0+k_1}{b^2})(f\phi).
\end{align}
And by comparison, we deduce that $\beta$ satisfies
\begin{align}
    &\beta+\frac{k_0+k_1+k_0k_1}{2b^2}-\frac{1+k_0+k_1}{b^2}=\Delta(1)+\Delta(k_1)+\Delta(k_0)-\Delta_\alpha \nonumber\\
    &\Rightarrow\beta=\Delta(1)+\Delta(k_1)+\Delta(k_0)+\frac{1+k_0+k_1}{b^2}-\alpha(b+\frac{1}{b}-\alpha)-\frac{k_0+k_1+k_0k_1}{2b^2}\nonumber\\
    &\Rightarrow\beta=-1.
\end{align}
So if we take $\beta=-1$ and $\alpha=b+\frac{1+k_0+k_1}{2b}$,  our corresponding conformal block with integral representation will satisfy the conformal identities \eqref{eq:genirrBPZ2} and \eqref{eq:genirrBPZ3}. Thus with \eqref{eq:genirrBPZ1}, \eqref{eq:genirrBPZ2} and \eqref{eq:genirrBPZ3}, following the same strategy as in the proof of \eqref{eq:irrBPZ}, the general conformal block $\mathcal{F}$ can be shown to satisfy the following BPZ equation after we substitute \eqref{eq:genirrBPZ2} and \eqref{eq:genirrBPZ3} into \eqref{eq:genirrBPZ1}, and set $z_1=1$, $z_0=0$:
\begin{align}\label{eq:genirrBPZ}
    &\left(b^2\frac{\partial^2}{\partial z^2}-(\frac{1}{z}+\frac{1}{z-1})\frac{\partial}{\partial z}+\frac{\Delta(k_1)}{(z-1)^2}+\frac{\Delta(k_0)}{z^2} \right. \nonumber\\
    &\left.+\frac{1}{z(z-1)}\left(c\frac{\partial}{\partial c}+\Delta_\alpha-\Delta(k_1)-\Delta(k_0)-\Delta(1)\right)-\frac{c(\alpha-Q)}{z}-\frac{c^2}{4}\right)\mathcal{F}=0.
\end{align}
Specifically, in the case of $k_0+k_1=-2b^2-1\Rightarrow\Delta_\alpha=0$, \eqref{eq:genirrBPZ} is exactly the same as the BPZ equation for irregular vectors appearing in \cite{Bonelli:2022ten}. And even in more general cases,  the BPZ equations for irregular vectors in  \cite{Bonelli:2022ten} can all be constructed from the integral form of the conformal blocks by modifying the value of $\beta$. 

\section{Bessel function and Stokes phenomenon from irregular singularity}
\label{sec:bessel}
Bessel functions $J_\alpha(z)$, $Y_\alpha(z)$ are linearly independent solutions of the following second order ordinary differential equation:
\begin{align}
z^2\frac{d^2f}{dz^2}+z\frac{d f}{dz}+(z^2-\alpha^2)f=0,
\end{align}
where $\alpha$ can be any complex number. The function $J_\alpha(z)$ is called the Bessel function of the first kind, and $Y_\alpha(z)$ the Bessel function of second kind. 

A particular linear combination of them is known as modified Bessel functions, and is defined as follows:
\begin{align}
    K_\alpha(z)=\frac{\pi}{2}\frac{i^{\alpha}J_{-\alpha}(iz)-i^{-\alpha}J_{\alpha}(iz)}{\sin(\alpha \pi)}.
\end{align}
Modified Bessel functions can be represented as an integral of $\exp(-2zt)(t^2-1)^{\alpha-\frac{1}{2}}z^\alpha $ with respect to $t$ along certain non-compact lines in the complex plane. We can see that the integrand resembles that of our 2-point conformal block.

A general second order  ordinary differential equation is of the form
\begin{align}
    \frac{d^2f}{d z^2} +p(z) \frac{df}{dz}+q(z)f=0,
\end{align}
with the coefficient functions $p(z)$ and $q(z)$ being Laurent polynomials on the complex plane. If either of these functions has a pole at a particular point $z=z_0$, we say that there is a \textit{singularity} at that point. One subdivides between \textit{regular singular} points, that is points where $(z-z_0)p(z)$ and $(z-z_0)^2q(z)$ are analytic around $z_0$, and \textit{irregular singular} points where this is not true. The presence of a singularity does not rule out the existence of solutions but indicates that solutions can have monodromies around singular points or, in the irregular case, only asymptotic expansions around those points can be found. In particular, near the irregular singularities, the solutions can be written as \cite{wasow1965asymptotic}:
\begin{align}
    f(z)=e^{Q(\frac{1}{z-z_0})}(\frac{1}{z-z_0)})^\rho\sum_{k=1}^\infty c_k (z-z_0)^k,
\end{align}
where $Q(z)$ is a polynomial function. As an example, the Bessel equation has a singularity at $z=0$ and a singularity at $z=\infty$, the first one being regular and the second one being irregular.

For regular singular points, solutions obtained as series expansions always converge in the neighborhood of $z_0$. And if we are interested in obtaining corresponding monodromies, we can easily read them off from the analytically continued expression for the solutions. But for irregular singularities, this approach fails, and we can only obtain solutions valid in certain regions, with different asymptotic behavior in different regions. These different regions are separated by co-dimension one lines in the $z$-plane known as \textit{Stokes lines} and upon crossing them, the corresponding bases of asymptotic solutions are connected by so called \textit{Stokes matrices}. This phenomenon is called the Stokes phenomenon.

\bibliographystyle{JHEP}
\bibliography{main}

\providecommand{\href}[2]{#2}\begingroup\raggedright\begin{thebibliography}{10}

\bibitem{Gaiotto:2009ma}
D.~Gaiotto, \emph{{Asymptotically free $\mathcal{N} = 2$ theories and irregular
  conformal blocks}},
  \href{https://doi.org/10.1088/1742-6596/462/1/012014}{\emph{J. Phys. Conf.
  Ser.} {\bfseries 462} (2013) 012014}
  [\href{https://arxiv.org/abs/0908.0307}{{\ttfamily 0908.0307}}].

\bibitem{Gaiotto:2012sf}
D.~Gaiotto and J.~Teschner, \emph{{Irregular singularities in Liouville theory
  and Argyres-Douglas type gauge theories, I}},
  \href{https://doi.org/10.1007/JHEP12(2012)050}{\emph{JHEP} {\bfseries 12}
  (2012) 050} [\href{https://arxiv.org/abs/1203.1052}{{\ttfamily 1203.1052}}].

\bibitem{Bonelli:2011aa}
G.~Bonelli, K.~Maruyoshi and A.~Tanzini, \emph{{Wild Quiver Gauge Theories}},
  \href{https://doi.org/10.1007/JHEP02(2012)031}{\emph{JHEP} {\bfseries 02}
  (2012) 031} [\href{https://arxiv.org/abs/1112.1691}{{\ttfamily 1112.1691}}].

\bibitem{Alday:2009aq}
L.F.~Alday, D.~Gaiotto and Y.~Tachikawa, \emph{{Liouville Correlation Functions
  from Four-dimensional Gauge Theories}},
  \href{https://doi.org/10.1007/s11005-010-0369-5}{\emph{Lett. Math. Phys.}
  {\bfseries 91} (2010) 167} [\href{https://arxiv.org/abs/0906.3219}{{\ttfamily
  0906.3219}}].

\bibitem{Argyres:1995jj}
P.C.~Argyres and M.R.~Douglas, \emph{{New phenomena in SU(3) supersymmetric
  gauge theory}},
  \href{https://doi.org/10.1016/0550-3213(95)00281-V}{\emph{Nucl. Phys. B}
  {\bfseries 448} (1995) 93}
  [\href{https://arxiv.org/abs/hep-th/9505062}{{\ttfamily hep-th/9505062}}].

\bibitem{Belavin:1984vu}
A.A.~Belavin, A.M.~Polyakov and A.B.~Zamolodchikov, \emph{{Infinite Conformal
  Symmetry in Two-Dimensional Quantum Field Theory}},
  \href{https://doi.org/10.1016/0550-3213(84)90052-X}{\emph{Nucl. Phys. B}
  {\bfseries 241} (1984) 333}.

\bibitem{Dotsenko:1984nm}
V.S.~Dotsenko and V.A.~Fateev, \emph{{Conformal Algebra and Multipoint
  Correlation Functions in Two-Dimensional Statistical Models}},
  \href{https://doi.org/10.1016/0550-3213(84)90269-4}{\emph{Nucl. Phys. B}
  {\bfseries 240} (1984) 312}.

\bibitem{Moore:1991ks}
G.W.~Moore and N.~Read, \emph{{Nonabelions in the fractional quantum Hall
  effect}}, \href{https://doi.org/10.1016/0550-3213(91)90407-O}{\emph{Nucl.
  Phys. B} {\bfseries 360} (1991) 362}.

\bibitem{Santachiara:2010bt}
R.~Santachiara and A.~Tanzini, \emph{{Moore-Read Fractional Quantum Hall
  wavefunctions and SU(2) quiver gauge theories}},
  \href{https://doi.org/10.1103/PhysRevD.82.126006}{\emph{Phys. Rev. D}
  {\bfseries 82} (2010) 126006}
  [\href{https://arxiv.org/abs/1002.5017}{{\ttfamily 1002.5017}}].

\bibitem{VafaFQHE}
C.~Vafa, \emph{Fractional quantum hall effect and m-theory},  2015.
\newblock 10.48550/ARXIV.1511.03372.

\bibitem{Bergamin:2019dhg}
R.~Bergamin and S.~Cecotti, \emph{{FQHE and $tt^{*}$ geometry}},
  \href{https://doi.org/10.1007/JHEP12(2019)172}{\emph{JHEP} {\bfseries 12}
  (2019) 172} [\href{https://arxiv.org/abs/1910.05022}{{\ttfamily
  1910.05022}}].

\bibitem{DasSarma:2005zz}
S.~Das~Sarma, M.~Freedman and C.~Nayak, \emph{{Topologically Protected Qubits
  from a Possible Non-Abelian Fractional Quantum Hall State}},
  \href{https://doi.org/10.1103/PhysRevLett.94.166802}{\emph{Phys. Rev. Lett.}
  {\bfseries 94} (2005) 166802}
  [\href{https://arxiv.org/abs/cond-mat/0412343}{{\ttfamily
  cond-mat/0412343}}].

\bibitem{Gaiotto:2014kfa}
D.~Gaiotto, A.~Kapustin, N.~Seiberg and B.~Willett, \emph{{Generalized Global
  Symmetries}}, \href{https://doi.org/10.1007/JHEP02(2015)172}{\emph{JHEP}
  {\bfseries 02} (2015) 172} [\href{https://arxiv.org/abs/1412.5148}{{\ttfamily
  1412.5148}}].

\bibitem{Shao:2023gho}
S.-H.~Shao, \emph{{What's Done Cannot Be Undone: TASI Lectures on
  Non-Invertible Symmetry}},
  \href{https://arxiv.org/abs/2308.00747}{{\ttfamily 2308.00747}}.

\bibitem{Bonelli:2022ten}
G.~Bonelli, C.~Iossa, D.P.~Lichtig and A.~Tanzini, \emph{{Irregular Liouville
  correlators and connection formulae for Heun functions}},
  \href{https://arxiv.org/abs/2201.04491}{{\ttfamily 2201.04491}}.

\bibitem{cmp/1104201923}
R.J.~Lawrence, \emph{{Homological representations of the Hecke algebra}},
  \href{https://doi.org/cmp/1104201923}{\emph{Communications in Mathematical
  Physics} {\bfseries 135} (1990) 141 }.

\bibitem{Varchenko}
V.V.~Schechtman and A.N.~Varchenko, \emph{Arrangements of hyperplanes and lie
  algebra homology},
  \href{https://doi.org/10.1007/BF01243909}{\emph{Inventiones mathematicae}
  {\bfseries 106} (1991) 139}.

\bibitem{Witten:1988hf}
E.~Witten, \emph{{Quantum Field Theory and the Jones Polynomial}},
  \href{https://doi.org/10.1007/BF01217730}{\emph{Commun. Math. Phys.}
  {\bfseries 121} (1989) 351}.

\bibitem{Moore-Seiberg}
G.~Moore and N.~Seiberg, \emph{Classical and quantum conformal field theory},
  \href{https://doi.org/10.1007/BF01238857}{\emph{Communications in
  Mathematical Physics} {\bfseries 123} (1989) 177}.

\bibitem{Gu:2023bjg}
X.~Gu and B.~Haghighat, \emph{{Liouville conformal blocks and Stokes
  phenomena}},  \href{https://arxiv.org/abs/2301.07957}{{\ttfamily
  2301.07957}}.

\bibitem{Gaiotto:2011nm}
D.~Gaiotto and E.~Witten, \emph{{Knot Invariants from Four-Dimensional Gauge
  Theory}}, \href{https://doi.org/10.4310/ATMP.2012.v16.n3.a5}{\emph{Adv.
  Theor. Math. Phys.} {\bfseries 16} (2012) 935}
  [\href{https://arxiv.org/abs/1106.4789}{{\ttfamily 1106.4789}}].

\bibitem{Cecotti:2014wea}
S.~Cecotti, A.~Neitzke and C.~Vafa, \emph{{Twistorial topological strings and a
  $\mathrm{tt}^*$ geometry for $\mathcal{N} = 2$ theories in $4d$}},
  \href{https://doi.org/10.4310/ATMP.2016.v20.n2.a1}{\emph{Adv. Theor. Math.
  Phys.} {\bfseries 20} (2016) 193}
  [\href{https://arxiv.org/abs/1412.4793}{{\ttfamily 1412.4793}}].

\bibitem{Teschner_2001}
J.~Teschner, \emph{Liouville theory revisited},
  \href{https://doi.org/10.1088/0264-9381/18/23/201}{\emph{Classical and
  Quantum Gravity} {\bfseries 18} (2001) R153}.

\bibitem{https://doi.org/10.48550/arxiv.hep-th/9304011}
P.~Ginsparg and G.~Moore, \emph{Lectures on 2d gravity and 2d string theory
  (tasi 1992)},  1993.
\newblock 10.48550/ARXIV.HEP-TH/9304011.

\bibitem{Jeong:2018qpc}
S.~Jeong and N.~Nekrasov, \emph{{Opers, surface defects, and Yang-Yang
  functional}}, \href{https://doi.org/10.4310/ATMP.2020.v24.n7.a4}{\emph{Adv.
  Theor. Math. Phys.} {\bfseries 24} (2020) 1789}
  [\href{https://arxiv.org/abs/1806.08270}{{\ttfamily 1806.08270}}].

\bibitem{Yang:1968rm}
C.-N.~Yang and C.P.~Yang, \emph{{Thermodynamics of one-dimensional system of
  bosons with repulsive delta function interaction}},
  \href{https://doi.org/10.1063/1.1664947}{\emph{J. Math. Phys.} {\bfseries 10}
  (1969) 1115}.

\bibitem{KNIZHNIK198483}
V.~Knizhnik and A.~Zamolodchikov, \emph{Current algebra and wess-zumino model
  in two dimensions},
  \href{https://doi.org/https://doi.org/10.1016/0550-3213(84)90374-2}{\emph{Nuclear
  Physics B} {\bfseries 247} (1984) 83}.

\bibitem{JIMBO1981306}
M.~Jimbo, T.~Miwa and K.~Ueno, \emph{Monodromy preserving deformation of linear
  ordinary differential equations with rational coefficients: I. general theory
  and tau-function},
  \href{https://doi.org/https://doi.org/10.1016/0167-2789(81)90013-0}{\emph{Physica
  D: Nonlinear Phenomena} {\bfseries 2} (1981) 306}.

\bibitem{Feigin_2010a}
B.~Feigin, E.~Frenkel and L.~Rybnikov, \emph{Opers with irregular singularity
  and spectra of the shift of argument subalgebra},
  \href{https://doi.org/10.1215/00127094-2010-057}{\emph{Duke Mathematical
  Journal} {\bfseries 155} (2010) }.

\bibitem{Feigin_2010b}
B.~Feigin, E.~Frenkel and V.T.~Laredo, \emph{Gaudin models with irregular
  singularities},
  \href{https://doi.org/10.1016/j.aim.2009.09.007}{\emph{Advances in
  Mathematics} {\bfseries 223} (2010) 873}.

\bibitem{XuQG}
X.~Xu, \emph{Representations of quantum groups arising from the stokes
  phenomenon and applications}, .

\bibitem{wasow1965asymptotic}
W.~Wasow, \emph{Asymptotic expansions for ordinary differential equations},
  Pure and Applied Mathematics, Vol. XIV, Interscience Publishers John Wiley \&
  Sons, Inc., New York-London-Sydney (1965).

\end{thebibliography}\endgroup

\end{document}